\newtheorem{theorem}{Theorem}
\newtheorem{lemma}{Lemma}
\begin{document}

\title{Coarse-grained self-testing}

\author{Ir\'en\'ee Fr\'erot}
\email{irenee.frerot@neel.cnrs.fr}
\affiliation{ICFO - Institut de Ciencies Fotoniques, The Barcelona Institute of Science and Technology, 08860 Castelldefels (Barcelona), Spain}
\affiliation{Max-Planck-Institut f\"ur Quantenoptik, Hans-Kopfermann-Stra{\ss}e 1, 85748 Garching, Germany}

\author{Antonio Ac\' in}
\affiliation{ICFO - Institut de Ciencies Fotoniques, The Barcelona Institute of Science and Technology, 08860 Castelldefels (Barcelona), Spain}
\affiliation{ICREA - Institucio Catalana de Recerca i Estudis Avan\c cats, Pg. Lluis Companys 23, 08010 Barcelona, Spain}

\begin{abstract}
Self-testing is a device-independent method that usually amounts to show that the maximal quantum violation of a Bell's inequality certifies a unique quantum state, up to some symmetries inherent to the device-independent framework. In this work, we enlarge this approach and show how a coarse-grained version of self-testing is possible in which physically relevant properties of a many-body system are certified. To this aim we study a Bell scenario consisting of an arbitrary number of parties and show that the membership to a set of (entangled) quantum states whose size grows exponentially with the number of parties can be self-tested. Specifically, we prove that a many-body generalization of the chained Bell inequality is maximally violated if and only if the underlying quantum state is equal, up to local isometries, to a many-body singlet. The maximal violation of the inequality therefore certifies any statistical mixture of the exponentially-many orthogonal pure states spanning the singlet manifold. 
\end{abstract}

\maketitle

Bell's inequalities (BIs) \cite{bell1964} constrain the correlation patterns achievable by any local-hidden-variables model~\cite{EPR1935}. Their violation by quantum-entangled states establishes the radically non-local nature of quantum statistical predictions \cite{brunneretal2014}. Analogously, the correlation patterns achievable by measuring quantum-entangled states are themselves also constrained -- by \textit{quantum} Bell's inequalities (qBIs) \cite{cirelson1980,brunneretal2014}. 
 Since the seminal work of Bell, BIs and qBIs have emerged as central concepts in the quest for laying the information-theoretic foundations of quantum physics \cite{popescuR1994,navascuesetal2015}; and in parallel, they have proved very powerful for characterizing quantum devices and protocols from minimal assumptions \cite{brunneretal2014}. The far-reaching conclusions which can be drawn from the violation of BIs, both from a foundational and from a quantum-certification perspective, rely fundamentally on their \textit{device-independent} nature. Namely, in contrast to quantum tomography \cite{paris2004quantum,flammiaL2011}, Bell tests involve solely the statistics of measurement results, and require no assumption about the Hilbert space of the system, neither about the observables which are actually being measured. 
 
 The device-independent nature of Bell tests makes them especially suited to robustly certify the new generation of quantum computers and simulators, where the qubits are effective two-level systems, requiring very careful calibration procedures in tomography protocols. For instance, the violation of a BI certifies the preparation of an entangled state, regardless of the correct calibration of the measurements. Quite remarkably, the maximal quantum violation of a BI -- reaching the quantum bound allowed by qBIs -- may allow one to certify, not only the generation of entanglement, but also the preparation of the specific quantum state and even the measurements necessarily performed to yield the observed correlations~\cite{mayersY2003}, up to symmetries inherent to the device-independent scenario, such as, e.g., local unitaries or complex conjugation. In essence, this so-called \textit{self-testing} phenomenon \cite{mayersY2003,supicB2020} realizes the simultaneous tomography of both a quantum state \textit{and} measurements, without prior assumptions on the system devices. 
 
While the self-testing phenomenon has been known for a long time for the celebrated Clauser-Horne-Shimony-Holt BI \cite{clauser_proposed_1969} (whose maximal violation self-tests the spin singlet $(|\uparrow \downarrow\rangle - |\downarrow \uparrow\rangle)/\sqrt{2}$ \cite{tsirelson1993,popescuR1992}), for multi-partite devices composed of $N\gg 1$ qubits it raises formidable challenges \cite{supicB2020}. Indeed, as the Hilbert space dimension grows exponentially with $N$, one can generically expect self-testing statements, which typically involve the fidelity with some target state, to be highly sensitive to noise, possibly exponentially, as $N$ increases. Instead of certifying a given target many-body quantum state \cite{takeuchiM2018,mckague10,baccarietal2020a}, it is natural in a many-body context to aim at self-testing less specific \cite{goh18,baccarietal2020,Makuta_2021}, yet physically relevant, global properties with a much better precision.

\begin{figure}
	\includegraphics[width=\linewidth]{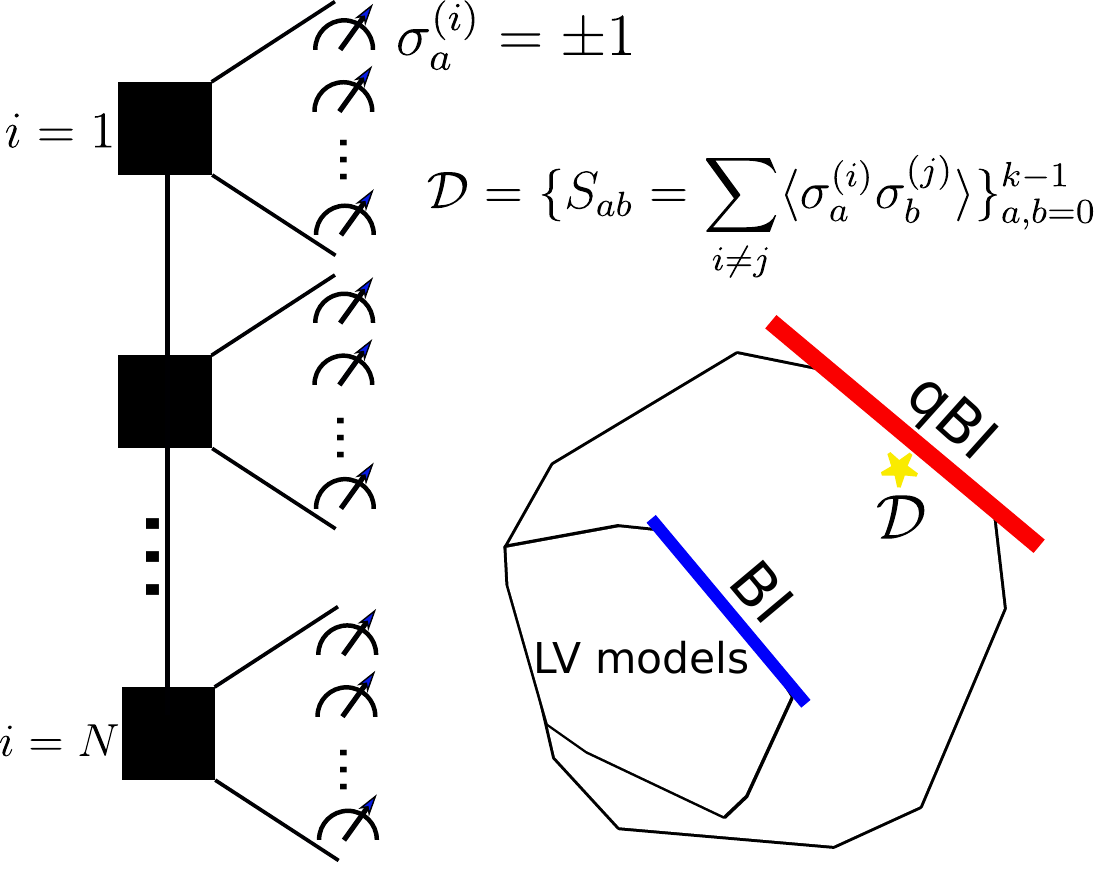}
	\caption{Bell scenario. We consider a scenario with $k \ge 3$ binary-outputs observables for each subsystem. In a Bell scenario, subsystems are treated as black boxes, generating some ouput $\sigma_a^{(i)} =\pm 1$ when the setting $a$ is implemented on subsystem $i$. Our self-testing procedure involves only very coarse-grain features of the correlations between these output: two-body correlations summed over all pairs of subsystems. If the corresponding data point (yellow star) reaches the quantum bound of our Bell's inequality (red solid line), a many-body singlet is self-tested.}
	\label{fig_setting}
\end{figure}

Here, we show that this form of coarse-grained certification is possible by self-testing the membership of the unknown quantum state $\hat\rho$ to the exponentially-degenerate subspace spanned by all many-body singlets, that is, zero-eigenstates of the total spin {$\hat {\bf J}^2 = [\sum_{i=1}^N \hat X^{(i)}/2]^2 + [\sum_{i=1}^N \hat Y^{(i)}/2]^2 + [\sum_{i=1}^N \hat Z^{(i)}/2]^2$ (with $\hat X^{(i)}$, $\hat Y^{(i)}$ and $\hat Z^{(i)}$ Pauli matrices for party $i$). Specifically, we construct a BI whose maximal quantum violation is obtained only by states $\hat\rho$ such that ${\rm Tr}[\hat \rho \hat {\bf J}^2]=0$}. 
We then show that the obtained certification is robust. From a certification perspective, many-body singlets are ground states of Heisenberg antiferromagnets \cite{Auerbach}, and emerge e.g. as low-energy states in simulators of Hubbard models \cite{koepselletal2019,chiuetal2019,Sun_2021}; they can also be prepared in atomic vapors using feedback schemes \cite{MorganExp}. From a fundamental perspective, our findings show that the boundaries of the set of quantum correlations, defined by qBIs, may correspond to exponentially degenerate entangled subspaces, and be reached by entangled many-body states of extensive entropy (here, any mixture of singlets, which span a subspace of dimension ${N \choose N/2} - {N \choose N/2-1} \sim 2^N \sqrt{8/(\pi N^3)}$ \cite{arecchietal1972}).

\noindent\textit{Definition of the Bell scenario.}
{We consider a situation where $k \ge 3$ possible measurements $\bar \sigma_a^{(i)}$ ($a\in\{0,1,\ldots k-1\}$) can be performed on each subsystem (or party) $i \in \{1, 2 \ldots N\}$ ($N$ is even), with $\pm 1$ as possible outcomes (Fig.~\ref{fig_setting}). In a device-independent scenario, the $\bar \sigma_a^{(i)}$ operators are not known -- in fact, even the Hilbert space over which they act is unknown. They obey the constraint of having $\pm 1$ as only eigenvalues, namely they square to the identity operator: $[\bar \sigma_a^{(i)}]^2= \bar{\mathbb{1}}$; furthermore, operators acting on different parties commute: $[\bar\sigma_a^{(i)}, \bar\sigma_b^{(j)}] = 0$ for all $a,b$ if $i \neq j$. Throughout this paper, the `bar' notation $\bar \sigma_a^{(i)}$ indicates that the operator acts on the uncharacterized subsystem $i$. Instead, the notation $\hat \sigma_a^{(i)}$ will denote a qubit operator acting on a two-level system attached to party $i$. In a practical implementation of the self-testing procedure, the parties are (effective) two-level systems, and the $\bar \sigma_a^{(i)}$ operators must correspond to projective measurements at equal angles in a given plane: $\bar \sigma_a^{(i)} \equiv \hat Z^{(i)} \cos(a\pi/k) + \hat X^{(i)} \sin(a \pi / k)$; this property is however not assumed, but instead is self-tested by the maximal violation of the BI. Finally, when discussing Bell's local-variable models, we denote as $\sigma_a^{(i)} = \pm 1$ the corresponding measurement outcome, which can be interpreted as a classical (Ising) spin \cite{Fine1982,BraunsteinC1990,frerotR2020}. 

In order to collect the data required for our self-testing procedure, the many-body system is identically prepared many times, varying the local measurement settings of the parties. Two-body correlators $\langle \bar\sigma_a^{(i)} \bar\sigma_b^{(j)} \rangle := {\rm Tr}[\bar \rho \bar \sigma_a^{(i)} \bar \sigma_b^{(j)} ]$ are then collected (for $i \neq j$).}  The following linear combination of two-body correlators (invariant under all permutations of the parties) is then formed~\cite{frerotR2020,mulleretal2020}: 
\begin{equation}
	{\cal B} =  \frac{2}{k}\sum_{a,b=0}^{k-1} S_{ab} \cos[\pi(a-b)/k] 
	\label{eq_BI}
\end{equation}
where $S_{ab} = \sum_{i \neq j} \langle \bar\sigma_a^{(i)} \bar\sigma_b^{(j)} \rangle$. For $k=3$, the quantity ${\cal B} $ is the sum over all pairs of the quantity involved in the so-called chained BI of Pearle and Braustein-Caves \cite{Pearle1970,BraunsteinC1990,frerotR2020}. The chained inequality (for any $k\ge 3$) is already known to self-test a spin singlet for $N=2$ \cite{supicetal2016}.

Introducing the collective operators $\bar{S_a} = \sum_{i=1}^N \bar \sigma_a^{(i)}$, we have $S_{ab} = \langle \bar{S_a} \bar{S_b} \rangle - \sum_i \langle \bar\sigma_a^{(i)} \bar\sigma_b^{(i)}\rangle$. We introduce the vector notations $[\bar{\bm \sigma}^{(i)}]^T=(\bar\sigma_0, \cdots \bar\sigma_{k-1})^{(i)}$ and $\bar{\bf S}^T = (\bar{S}_0, \ldots \bar{S}_{k-1})$, and the matrix $M_{ab} = (2/k)\cos[\pi(a-b)/k]$. We may then rewrite Eq.~\eqref{eq_BI} as:
\begin{equation}
	{\cal B}= \langle \bar{\bf S}^T M \bar{\bf S} \rangle - \sum_{i=1}^N \langle \bar{\bm \sigma}^T M \bar{\bm \sigma}\rangle^{(i)} ~.\label{eq_BI_M_expr}
\end{equation}
The matrix $M$ is diagonalized as $M={\bf c} {\bf c}^T + {\bf s} {\bf s}^T$, where ${\bf c}^T = \sqrt{2/k} [\cos(a\pi/k)]_{a=0}^{k-1}$ and ${\bf s}^T = \sqrt{2/k} [\sin(a\pi/k)]_{a=0}^{k-1}$ are normalized orthogonal vectors (${\bf s}^T{\bf c}=0$ and ${\bf s}^T{\bf s}={\bf c}^T{\bf c}=1$). As the matrix $M$ is a rank-2 projector, it is therefore semi-definite positive, so that $\langle \bar{\bf S}^T M \bar{\bf S} \rangle \ge 0$. 

{
\begin{lemma}(Classical bound of the Bell's inequality)
For any Bell's local-variable model, ${\cal B} \ge B_{\rm c}$, with the classical bound $B_{\rm c}=-\frac{2N}{k\sin^2[\pi/(2k)]}$.
\label{thm_classical_bound}
\end{lemma}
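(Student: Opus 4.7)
The plan is to exploit the rank-two spectral decomposition of $M$ already given in the excerpt and reduce the problem to a single-party extremization over Ising spins. First, it suffices to prove the bound for a deterministic assignment $\sigma_a^{(i)} \in \{\pm 1\}$, since any local-variable model is a convex mixture of such strategies and the bound is preserved by convex combinations. For a deterministic assignment, Eq.~\eqref{eq_BI_M_expr} reads
\begin{equation}
\mathcal{B} = \mathbf{S}^T M \mathbf{S} - \sum_{i=1}^N (\bm{\sigma}^{(i)})^T M \bm{\sigma}^{(i)}.
\end{equation}
Since $M = \mathbf{c}\mathbf{c}^T + \mathbf{s}\mathbf{s}^T$ is positive semi-definite, the first term satisfies $\mathbf{S}^T M \mathbf{S} \geq 0$ and can simply be dropped, giving $\mathcal{B} \geq -N\, M_\star$, where
\begin{equation}
M_\star := \max_{\sigma\in\{\pm1\}^k} \sigma^T M \sigma.
\end{equation}

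The remaining task is to compute $M_\star$ explicitly. Using the eigendecomposition of $M$ and packaging $\mathbf{c} + i\mathbf{s}$ into a single complex vector, I would rewrite
\begin{equation}
\sigma^T M \sigma = (\mathbf{c}^T \sigma)^2 + (\mathbf{s}^T \sigma)^2 = \frac{2}{k}\left|\sum_{a=0}^{k-1} e^{i a\pi/k}\,\sigma_a\right|^2.
\end{equation}
Maximizing $|\sum_a e^{ia\pi/k}\sigma_a|$ over $\sigma_a=\pm 1$ is a standard alignment argument: for any target phase $\phi$, the optimal choice is $\sigma_a = \mathrm{sign}\!\left[\cos(a\pi/k-\phi)\right]$, and the optimum over $\phi$ is attained when the unit vectors $e^{ia\pi/k}$ can all be aligned on the same half-plane, here $\phi_\star=(k-1)\pi/(2k)$, which yields $\sigma_a=+1$ for every $a$.

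Evaluating the resulting geometric sum gives
\begin{equation}
\sum_{a=0}^{k-1} e^{ia\pi/k} = \frac{e^{i\pi}-1}{e^{i\pi/k}-1}, \qquad \left|\sum_{a=0}^{k-1} e^{ia\pi/k}\right| = \frac{1}{\sin[\pi/(2k)]},
\end{equation}
so that $M_\star = 2/(k\sin^2[\pi/(2k)])$ and hence $\mathcal{B}\geq -N M_\star = B_\mathrm{c}$, as claimed. The only delicate step is justifying that $\sigma_a=+1$ is globally optimal and not just a local extremum; this follows from the half-plane alignment argument above, noting that the $k$ unit vectors $e^{ia\pi/k}$ all lie in $[0,\pi)$ and can therefore be simultaneously projected positively on the axis $e^{i\phi_\star}$. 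Finally, the bound is tight: any local strategy that pairs each party with a partner carrying the opposite Ising configuration gives $\mathbf{S}=0$ and saturates $M_\star$ on each party, thus saturating $B_\mathrm{c}$.
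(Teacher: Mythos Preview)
Your proof is correct and follows essentially the same route as the paper: drop the non-negative collective term $\mathbf{S}^T M\,\mathbf{S}$, reduce to the single-party maximization $M_\star=(2/k)\max_{\sigma}\bigl|\sum_a \sigma_a e^{ia\pi/k}\bigr|^2$, identify $\sigma_a\equiv +1$ as the maximizer, and verify tightness via the antipodal pairing that kills $\mathbf{S}$. You are in fact slightly more explicit than the paper on the convexity reduction to deterministic assignments and on the half-plane alignment justifying global optimality of $\sigma_a=+1$, but the core argument is identical.
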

}
\begin{proof} {Recall that in this context, the $\sigma_a^{(i)}=\pm 1$ are just classical Ising spins. First, we notice that for $N$ even, one may always achieve $S_a=\sum_{i=1}^N \sigma_a^{(i)}=0$ for all $a \in \{0, \dots, k-1\}$, and therefore $\langle {\bf S}^T M {\bf S} \rangle = 0$. This is achieved by choosing a given configuration $\sigma_a$ on half of the parties, and the configuration $-\sigma_a$ on the remaining parties.} The classical bound is then found as: $B_{\rm c} = -N\max_{\bm \sigma \in \{\pm 1\}^k} ({\bm \sigma}^T M {\bm \sigma})= -(2N/k)\max_{{\bm \sigma} \in \{\pm 1\}^k} \left\vert \sum_{a=0}^{k-1} \sigma_a e^{ia\pi/k} \right\vert^2= -2N/(k\sin^2[\pi/(2k)])$, where the maximum is achieved by choosing, for instance, all $\sigma_a=+1$.
\end{proof}

{After deriving the classical bound for our BI, we move to the derivation of its maximal quantum violation.
\begin{lemma}(Quantum bound of the Bell's inequality)
  For any choice of measurements on any quantum state, ${\cal B} \ge B_{\rm q}$, with the quantum bound $B_{\rm q} = -Nk$. 
\end{lemma}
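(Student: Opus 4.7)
The plan is to exploit the decomposition of $\mathcal{B}$ already provided in Eq.~\eqref{eq_BI_M_expr}, namely
\[
\mathcal{B} = \langle \bar{\bf S}^T M \bar{\bf S}\rangle - \sum_{i=1}^N \langle \bar{\bm\sigma}^T M \bar{\bm\sigma}\rangle^{(i)},
\]
and to bound each of the two terms separately. The key algebraic facts about $M$ are that it is a rank-$2$ orthogonal projector, so that $0 \le M \le \mathbb{1}_k$ in the matrix sense. I will combine these with the algebraic relations $[\bar\sigma_a^{(i)}]^2 = \bar{\mathbb{1}}$ and $[\bar\sigma_a^{(i)},\bar\sigma_b^{(j)}]=0$ for $i\neq j$.

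First I would handle the collective term. Writing $M={\bf c}{\bf c}^T+{\bf s}{\bf s}^T$, one immediately has
\[
\bar{\bf S}^T M \bar{\bf S} = \bigl({\bf c}^T\bar{\bf S}\bigr)^2 + \bigl({\bf s}^T\bar{\bf S}\bigr)^2,
\]
which is a sum of squares of Hermitian operators (the product-ordering issue is harmless because each pair of operators appearing inside a single square acts on the same formal vector index, not on the parties). Hence $\bar{\bf S}^T M \bar{\bf S}\succeq 0$ and its expectation is non-negative on any state. This first step contributes $0$ to the lower bound.

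Next, I would bound the local terms. Extending $\{{\bf c},{\bf s}\}$ to an orthonormal basis $\{{\bf v}_\alpha\}_{\alpha=1}^k$ of $\mathbb{R}^k$ with eigenvalues $\lambda_\alpha\in\{0,1\}$ of $M$, one has
\[
\bar{\bm\sigma}^T M \bar{\bm\sigma} = \sum_{\alpha}\lambda_\alpha \bigl({\bf v}_\alpha^T\bar{\bm\sigma}\bigr)^2 \;\preceq\; \sum_{\alpha}\bigl({\bf v}_\alpha^T\bar{\bm\sigma}\bigr)^2 = \sum_{a=0}^{k-1}\bar\sigma_a^2 = k\,\bar{\mathbb{1}},
\]
where in the last equality I used that $[\bar\sigma_a^{(i)}]^2=\bar{\mathbb{1}}$. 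Taking expectation values therefore yields $\langle \bar{\bm\sigma}^T M \bar{\bm\sigma}\rangle^{(i)}\le k$ for every party $i$, contributing at most $-Nk$ to the lower bound on $\mathcal{B}$.

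Combining the two bounds gives $\mathcal{B}\ge 0 - Nk = B_{\rm q}$, as claimed. There is no real obstacle here: once the projector decomposition of $M$ is in hand, everything reduces to the operator inequalities $M\succeq 0$ and $M\preceq\mathbb{1}_k$, together with the idempotency of the local observables. The only subtlety worth being explicit about is that the argument works despite the non-commutativity of $\bar\sigma_a$ and $\bar\sigma_b$ at the same site, because $\bar{\bm\sigma}^T M \bar{\bm\sigma}-\bar{\bm\sigma}^T\bar{\bm\sigma}$ can be written as a sum of squares of Hermitian operators with a common ordering that does not require the $\bar\sigma_a$'s to commute among themselves.
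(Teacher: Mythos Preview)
Your proof is correct and is essentially the same as the paper's: both use that $M$ is a rank-$2$ projector to write $\bar{\cal B}+Nk\bar{\mathbb{1}}=({\bf c}^T\bar{\bf S})^2+({\bf s}^T\bar{\bf S})^2+\sum_i[\bar{\bm\sigma}^T(\mathbb{1}-M)\bar{\bm\sigma}]^{(i)}$ as a sum of squares, which is exactly your two separate bounds combined into one identity. (Minor slip in your last sentence: it is $\bar{\bm\sigma}^T\bar{\bm\sigma}-\bar{\bm\sigma}^T M\bar{\bm\sigma}$, not the opposite sign, that is the sum of squares.)
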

\begin{proof}
To prove that ${\cal B}\ge -Nk$, we construct the Bell operator $\bar{\cal B}$ (such that ${\cal B}=\langle \bar {\cal B} \rangle$), shifted by its claimed quantum bound $-Nk$. Using the property $Nk\bar{\mathbb{1}} = \sum_{i=1}^N \sum_{a=0}^{k-1} [\bar\sigma_a^{(i)}]^2 = \sum_{i=1}^N [\bar{\bm \sigma}^T \bar{\bm \sigma}]^{(i)}$, we find:
\begin{equation}
	\bar{\cal B} + Nk\bar{\mathbb{1}} = ({\bf c}^T \bar{\bf S})^2 + ({\bf s}^T \bar{\bf S})^2 
	+ \sum_{i=1}^N [\bar{\bm \sigma}^T (\mathbb{1} - M) \bar{\bm \sigma}]^{(i)}  ~. \label{eq_SOS_0}
\end{equation}
As mentioned, $\bar\sigma_a^{(i)}$ in this expression are generic operators in an arbitrary Hilbert space with the only constraint that $[\bar\sigma_a^{(i)}]^2=\bar{\mathbb{1}}$.
Since $M$ is a projector, so is $\mathbb{1}-M$, and hence $[\mathbb{1}-M]^2 = \mathbb{1}-M$. This decomposition therefore realizes a so-called sum-of-squares (SOS) decomposition of the Bell's inequality, showing that for any quantum state $|\psi\rangle$ and two-outcome local observables $\bar\sigma_a^{(i)}$, we have $\langle \psi |\bar{\cal B}|\psi\rangle +Nk\ge 0$. 
\end{proof}

After deriving the quantum bound, we want to show that it can be attained by measuring a many-body singlet; and conversely, that the maximal violation self-tests a many-body singlet. To prove this result, it is first convenient to identify conditions implied by the maximal quantum violation of the BI. To do so, we introduce the two operators:
\begin{subequations}
\label{eq_def_barZX}
\begin{align}
	\bar{Z}^{(i)} := \sqrt{2/k} ~ {\bf c}^T \bar{\bm \sigma}^{(i)} = \frac{2}{k}\sum_{a=0}^{k-1} \bar  \sigma_a^{(i)} \cos\left(\frac{a\pi}{k}\right) \label{eq_def_barZ}\\
	\bar{X}^{(i)} := \sqrt{2/k} ~ {\bf s}^T \bar{\bm \sigma}^{(i)} = \frac{2}{k}\sum_{a=0}^{k-1}\bar  \sigma_a^{(i)}  \sin\left(\frac{a\pi}{k}\right)\label{eq_def_barX} ~.
\end{align}
\end{subequations}
Defining the (hermitian) operators:
\begin{subequations}
\begin{align}
&\bar S_z := \sum_{i=1}^N \bar{Z}^{(i)}~~~;~~~\bar S_x := \sum_{i=1}^N \bar{X}^{(i)} \\
&\bar A_a^{(i)} := \bar \sigma_a^{(i)} - \frac{2}{k}\sum_{b=0}^{k-1} \bar \sigma_b^{(i)} \cos\left(\pi\frac{a-b}{k}\right) ~, \label{eq_def_Abar}
\end{align}
\end{subequations}
the SOS decomposition Eq.~\eqref{eq_SOS_0} reads:
\begin{equation}
	\bar{\cal B} + Nk\bar{\mathbb{1}} = \frac{k}{2}(\bar{S}_z^2 + \bar{S}_x^2) 
	+ \sum_{i=1}^N \sum_{a=0}^{k-1} [\bar{A}_a{(i)}]^2  ~.
	\label{eq_SOS}
\end{equation}The following Lemma \ref{lemma_paulis} shows that $\bar{Z}^{(i)}$ and $\bar{X}^{(i)}$ [Eq.~\eqref{eq_def_barZX}] act on a quantum state reaching the quantum bound like the Pauli matrices $\hat Z^{(i)}$ and $\hat X^{(i)}$ act on a qubit:
\begin{lemma}
Let $|\psi\rangle$ be a quantum state s.t. $\langle \psi |\bar{\cal B}|\psi\rangle = - Nk$. We have:\begin{subequations}
\label{eq_properties_XZ}
\begin{align}
[\bar{Z}^{(i)}]^2 |\psi \rangle = [\bar{X}^{(i)}]^2|\psi \rangle = |\psi\rangle ~,\\ 
(\bar{Z}\bar{X} + \bar{X}\bar{Z})^{(i)}|\psi \rangle = 0 ~.
\end{align}
\end{subequations}
\label{lemma_paulis}
\end{lemma}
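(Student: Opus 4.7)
The plan is to deduce the three relations from the sum-of-squares decomposition \eqref{eq_SOS} of $\bar{\cal B}+Nk\bar{\mathbb{1}}$. Each summand on the right-hand side is the square of a Hermitian operator, so the saturation hypothesis $\langle\psi|\bar{\cal B}+Nk|\psi\rangle = 0$ forces every such operator to annihilate $|\psi\rangle$: $\bar S_z|\psi\rangle = \bar S_x|\psi\rangle = 0$ and $\bar A_a^{(i)}|\psi\rangle = 0$ for all $i,a$. Using \eqref{eq_def_barZX}, the latter rewrites as the key identity $\bar\sigma_a^{(i)}|\psi\rangle = \cos(a\pi/k)\bar Z^{(i)}|\psi\rangle + \sin(a\pi/k)\bar X^{(i)}|\psi\rangle$, which at $a=0$ specializes to $\bar\sigma_0^{(i)}|\psi\rangle = \bar Z^{(i)}|\psi\rangle$.

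For $[\bar Z^{(i)}]^2|\psi\rangle = |\psi\rangle$, I would chain a short sequence of manipulations: replace the inner $\bar Z^{(i)}|\psi\rangle$ by $-\sum_{j\ne i}\bar Z^{(j)}|\psi\rangle$ via $\bar S_z|\psi\rangle = 0$; commute the outer $\bar Z^{(i)}$ past each $\bar Z^{(j)}$ (different parties commute); substitute $\bar Z^{(i)}|\psi\rangle = \bar\sigma_0^{(i)}|\psi\rangle$; commute $\bar\sigma_0^{(i)}$ through the $\bar Z^{(j)}$; invoke $\bar S_z|\psi\rangle=0$ once more and close with $[\bar\sigma_0^{(i)}]^2 = \bar{\mathbb{1}}$. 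For $[\bar X^{(i)}]^2|\psi\rangle$, rather than reproducing this chain (which would need $\bar X^{(i)}|\psi\rangle$ to equal some $\bar\sigma_a^{(i)}|\psi\rangle$, not available for odd $k$), I would exploit the purely local algebraic identity
\begin{equation*}
[\bar Z^{(i)}]^2 + [\bar X^{(i)}]^2 = \tfrac{2}{k}\sum_a \bar\sigma_a^{(i)}\Bigl[\tfrac{2}{k}\sum_b \cos[\pi(a-b)/k]\bar\sigma_b^{(i)}\Bigr],
\end{equation*}
whose inner bracket applied to $|\psi\rangle$ is exactly $\bar\sigma_a^{(i)}|\psi\rangle$ by $\bar A_a^{(i)}|\psi\rangle = 0$; combined with $[\bar\sigma_a^{(i)}]^2 = \bar{\mathbb{1}}$ this yields $([\bar Z^{(i)}]^2 + [\bar X^{(i)}]^2)|\psi\rangle = 2|\psi\rangle$, from which $[\bar X^{(i)}]^2|\psi\rangle = |\psi\rangle$ follows.

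For the anticommutation, the strategy is to apply $\bar\sigma_a^{(i)}$ once more to the key identity and use $[\bar\sigma_a^{(i)}]^2 = \bar{\mathbb{1}}$, producing $|\psi\rangle = \cos(a\pi/k)\bar\sigma_a^{(i)}\bar Z^{(i)}|\psi\rangle + \sin(a\pi/k)\bar\sigma_a^{(i)}\bar X^{(i)}|\psi\rangle$. Each vector $\bar\sigma_a^{(i)}\bar Z^{(i)}|\psi\rangle$ and $\bar\sigma_a^{(i)}\bar X^{(i)}|\psi\rangle$ is then simplified by re-running the shuttle argument of the previous paragraph, now using $\bar S_z$ (resp.\ $\bar S_x$), the key identity, and the Pauli-square relations just proved. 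One obtains $\bar\sigma_a^{(i)}\bar Z^{(i)}|\psi\rangle = \cos(a\pi/k)|\psi\rangle + \sin(a\pi/k)\bar X^{(i)}\bar Z^{(i)}|\psi\rangle$ and its $\cos\leftrightarrow\sin$ counterpart for $\bar\sigma_a^{(i)}\bar X^{(i)}|\psi\rangle$. Substituting back, the $\cos^2(a\pi/k)+\sin^2(a\pi/k)=1$ combination reproduces $|\psi\rangle$ and only $\tfrac12\sin(2a\pi/k)\,\{\bar Z^{(i)},\bar X^{(i)}\}|\psi\rangle = 0$ survives; since $k\ge 3$ ensures that some $a$ (e.g.\ $a=1$) satisfies $\sin(2a\pi/k)\ne 0$, the anticommutator must vanish on $|\psi\rangle$.

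The main obstacle is this last step. The first two identities rely on a single commutation trick together with one SOS family, but the anticommutator weaves all three SOS constraints together with the Pauli-squares through several layers of cross-party commutation. The argument would fail for $k=2$, where no $a\in\{0,1\}$ gives $\sin(2a\pi/k)\ne 0$; the lemma's requirement $k\ge 3$ is precisely what rescues this final step.
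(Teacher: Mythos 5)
Your proof is correct, and for the hard part it takes a genuinely different route from the paper's. Both start identically: saturation of the SOS decomposition \eqref{eq_SOS} forces $\bar S_z|\psi\rangle=\bar S_x|\psi\rangle=0$ and $\bar A_a^{(i)}|\psi\rangle=0$, i.e.\ the key identity \eqref{eq_self_test_measurements}, and both prove $(\bar Z^2+\bar X^2)^{(i)}|\psi\rangle=2|\psi\rangle$ from the local SOS term exactly as you do. But the paper then passes to the non-Hermitian combinations $(\bar Z\pm i\bar X)^{(j)}$: it defines $\bar R_a^{(j)}=2e^{ia\pi/k}\bar\sigma_a^{(j)}-(\bar Z+i\bar X)^{(j)}$, applies $\bar R_a^{(j)}$ to the rewritten key identity, shuttles $(\bar Z-i\bar X)^{(j)}|\psi\rangle$ onto the other parties via $\bar S_{z}|\psi\rangle=\bar S_x|\psi\rangle=0$, and sums over all settings, invoking the geometric cancellation $\sum_{a=0}^{k-1}e^{4i\pi a/k}=0$ (this is where $k\ge 3$ enters for them) to get $[(\bar Z\pm i\bar X)^{(j)}]^2|\psi\rangle=0$, whose sum and difference yield $(\bar Z^2-\bar X^2)^{(j)}|\psi\rangle=0$ and the anticommutator. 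You instead stay with Hermitian operators and fixed settings: the $a=0$ shuttle gives $[\bar Z^{(i)}]^2|\psi\rangle=|\psi\rangle$ outright, the sum identity then gives $[\bar X^{(i)}]^2|\psi\rangle=|\psi\rangle$ (neatly sidestepping the absence of a setting aligned with $\bar X$ when $k$ is odd), and a second application of $\bar\sigma_a^{(i)}$ plus one more shuttle isolates $\sin(2a\pi/k)\,(\bar Z\bar X+\bar X\bar Z)^{(i)}|\psi\rangle=0$, with $k\ge 3$ entering only through the existence of one setting with $\sin(2a\pi/k)\neq 0$. Both arguments rest on the same two ingredients (annihilation of every SOS term and cross-party commutation), but yours is more elementary — no complex combinations, no summation over settings — and shows that only two settings are actually needed for this lemma; the paper's phase-summation structure, on the other hand, is the one that is recycled essentially verbatim in the robustness analysis and in the generalization to arbitrary local phases in the appendices, where operator norms of the $\bar R_a^{(j)}$ are tracked quantitatively.
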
}
\begin{proof}
First, it is straighforward to verify that if the parties hold qubits with measurement operators $\bar\sigma_a^{(i)} = \hat Z^{(i)} \cos(a\pi/k) + \hat X^{(i)} \sin(a\pi/k)$, then indeed in Eq.~\eqref{eq_def_barZX} $\bar{Z}^{(i)} = \hat Z^{(i)}$ and $\bar{X}^{(i)} = \hat{X}^{(i)}$. In order to prove Lemma \ref{lemma_paulis}, we shall exploit conditions imposed by the SOS decomposition Eq.~\eqref{eq_SOS} being zero, {and that are necessary to reach the quantum bound}: $\sum_p \langle \psi | \bar O_p^\dagger \bar O_p |\psi \rangle = 0$ iff for all $p$, $\bar O_p |\psi \rangle = 0$. 
Our proof proceeds in two steps: 1) we show that $[\bar{Z}^2 + \bar{X}^2]^{(i)}|\psi \rangle = 2|\psi\rangle$; then 2) that $[\bar{Z}^2 - \bar{X}^2]^{(i)}|\psi \rangle = 0 = [\bar{Z} \bar{X} + \bar{X} \bar{Z}]^{(i)}|\psi \rangle$.

1) For a moment, we drop the superscript $(i)$; the following equalities hold for each subsystem. We have $\bar{Z}^2 + \bar{X}^2 = (2/k)[({\bf c}^T \bar{\bm \sigma})({\bf c}^T \bar{\bm \sigma}) + ({\bf s}^T \bar{\bm \sigma})({\bf s}^T \bar{\bm \sigma})] = (2/k)\bar{\bm \sigma}^T[{\bf c}{\bf c}^T + {\bf s}{\bf s}^T]\bar{\bm \sigma} = (2/k)\bar{\bm \sigma}^T M \bar{\bm \sigma}$. The SOS implies that $\bar{\bm \sigma}^TM\bar{\bm \sigma} |\psi \rangle = \bar{\bm \sigma}^T \bar{\bm \sigma} |\psi \rangle$, which is equal to $k|\psi \rangle$. This shows that $(\bar{Z}^2 + \bar{X}^2)|\psi \rangle = 2|\psi\rangle$.

2) The second part of the proof starts from $\bar A_a^{(j)}|\psi\rangle = 0$, where $\bar A_a^{(j)}$ is defined in Eq.~\eqref{eq_def_Abar}. Inserting the definition of $\bar{Z}^{(j)}$ and $\bar{X}^{(j)}$ [Eq.~\eqref{eq_def_barZX}], we obtain:
\begin{equation}
 	[\bar{Z} \cos(a\pi/k)   + \bar{X}\sin(a\pi/k)]^{(j)} |\psi \rangle = \bar \sigma_a^{(j)} |\psi \rangle ~.\label{eq_self_test_measurements}
\end{equation}
We use Eq.~\eqref{eq_self_test_measurements} in the form:
\begin{equation}
	e^{2i\pi a /k}(\bar{Z} - i\bar{X})^{(j)}|\psi \rangle = [2e^{i\pi a /k}\bar \sigma_a - (\bar{Z} + i\bar{X})]^{(j)} |\psi \rangle ~.
	\label{eq_starting_XZ_ZX}
\end{equation}
We define the operator $\bar R_a^{(j)} = [2e^{i\pi a /k} \bar\sigma_a - (\bar{Z} + i\bar{X})]^{(j)}$, apply $\bar R_a^{(j)}$ to the last equality, and sum over $a$. On the r.h.s, we simply have $\sum_{a=0}^{k-1} \bar R_a^{(j)} \bar R_a^{(j)} = -k[\bar{Z}^{(j)} + i\bar{X}^{(j)}]^2$. On the l.h.s, we need to evaluate $\bar R_a^{(j)}(\bar{Z} - i\bar{X})^{(j)}|\psi \rangle$. From the SOS, we have that $\sum_{j} \bar{Z}^{(j)} |\psi \rangle = \bar{S}_z |\psi \rangle = 0$ and also $\sum_{j} \bar{X}^{(j)} |\psi \rangle = \bar{S}_x |\psi \rangle = 0$. We define $\bar \Sigma_Z^{(j)} = \sum_{j'\neq j} \bar{Z}^{(j)}$ and $\bar \Sigma_X^{(j)} = \sum_{j'\neq j} \bar{X}^{(j)}$, so that $(\bar{Z} - i\bar{X})^{(j)}|\psi \rangle = -(\bar \Sigma_Z - i \bar \Sigma_X)^{(j)} |\psi \rangle$. As the operators $\bar \Sigma_Z^{(j)}$ and  $\bar \Sigma_X^{(j)}$ act on subsystems $j'\neq j$, they commute with any operator acting on $j$. Therefore:
\begin{eqnarray}
	\bar R_a^{(j)}(\bar{Z} - i\bar{X})^{(j)}|\psi \rangle = -\bar R_a^{(j)}(\bar \Sigma_Z - i\bar \Sigma_X)^{(j)}|\psi \rangle \\
	= -(\bar \Sigma_Z - i\bar \Sigma_X)^{(j)}\bar R_a^{(j)}|\psi \rangle \\
	= -(\bar \Sigma_Z - i \bar \Sigma_X)^{(j)}e^{2i\pi a /k}(\bar{Z} - i\bar{X})^{(j)}|\psi \rangle \\
	= -e^{2i\pi a /k}(\bar{Z} - i\bar{X})^{(j)}(\bar\Sigma_Z - i\bar\Sigma_X)^{(j)}|\psi \rangle \\
	= e^{2i\pi a /k}(\bar{Z} - i\bar{X})^{(j)}(\bar{Z} - i\bar{X})^{(j)}|\psi \rangle
\end{eqnarray}
We now have $\bar R_a^{(j)}e^{2i\pi a /k}(\bar{Z} - i\bar{X})^{(j)}|\psi \rangle = e^{4i\pi a /k}[\bar{Z}^{(j)} - i\bar{X}^{(j)}]^2 |\psi \rangle$. Summing over $a$, and using that $\sum_{a=0}^{k-1} e^{4i\pi a /k} = 0$ for all $k \ge 3$, we conclude that $[\bar{Z}^{(j)} + i\bar{X}^{(j)}]^2|\psi \rangle = 0$, namely: $[\bar{Z}^2 - \bar{X}^2 + i(\bar{Z} \bar{X} + \bar{X} \bar{Z})]^{(j)}|\psi \rangle = 0$.
Following the same reasoning interchanging the role of $\bar{Z}^{(j)} + i\bar{X}^{(j)}$ and $\bar{Z}^{(j)} - i\bar{X}^{(j)}$ in Eq.~\eqref{eq_starting_XZ_ZX}, we also have $[\bar{Z}^2 - \bar{X}^2- i(\bar{Z} \bar{X} + \bar{X} \bar{Z})]^{(j)}|\psi \rangle = 0$. Taking the sum and difference of these two inequalities, we obtain the announced result 2): for all subsystems $j$, $[\bar{Z}^2 - \bar{X}^2]^{(j)}|\psi \rangle = 0 = [\bar{Z} \bar{X} + \bar{X} \bar{Z}]^{(j)}|\psi \rangle$. This concludes our proof of Eq.~\eqref{eq_properties_XZ}, namely that $\bar{Z}^{(i)}$ and $\bar{X}^{(i)}$ act on an unkown quantum state state $|\psi\rangle$ maximally violating the Bell's inequality as the Pauli operators $\hat Z^{(i)}$ and $\hat X^{(i)}$ act on a qubit.
\end{proof}

We have now derived all the ingredients needed to prove the main result of this work, namely the self-testing of many-body singlet states.

  \begin{theorem}
The maximal violation of the BI~\eqref{eq_BI}, ${\cal B} = -Nk$, is attained iff the state is (up to a local isometry mapping each party onto a qubit) a many-body singlet, with measurements $\hat\sigma_a^{(i)} = \hat Z^{(i)} \cos (a\pi/k) + \hat X^{(i)} \sin (a\pi/k)$.
  	\label{thm_self_test}
  \end{theorem}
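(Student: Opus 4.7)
\noindent\textit{Proof proposal.} The plan is to establish both directions of the equivalence by combining the algebraic consequences of the saturation of the SOS decomposition Eq.~\eqref{eq_SOS} with a standard local swap-isometry construction from the self-testing toolbox. For the ``if'' direction, I would proceed by direct substitution: plugging the ideal qubit measurements $\hat\sigma_a^{(i)} = \hat Z^{(i)}\cos(a\pi/k)+\hat X^{(i)}\sin(a\pi/k)$ into Eqs.~\eqref{eq_def_barZX} one finds $\bar Z^{(i)}=\hat Z^{(i)}$ and $\bar X^{(i)}=\hat X^{(i)}$, so that $\bar S_z$ and $\bar S_x$ are (twice) the singlet-annihilating collective spin components, while the operators $\bar A_a^{(i)}$ of Eq.~\eqref{eq_def_Abar} vanish identically. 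All three squares in Eq.~\eqref{eq_SOS} are then zero on a many-body singlet, giving ${\cal B}=-Nk$.

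For the nontrivial ``only if'' direction, I would follow the Mayers--Yao approach. On each party I construct a local swap-type isometry $\Phi^{(i)}$ built out of $\bar Z^{(i)}$ and $\bar X^{(i)}$ (equipped with a fresh ancilla qubit initialized in $|0\rangle$), and set $\Phi=\bigotimes_{i=1}^N\Phi^{(i)}$. By Lemma~\ref{lemma_paulis}, $\bar Z^{(i)}$ and $\bar X^{(i)}$ square to the identity on $|\psi\rangle$ and anticommute on $|\psi\rangle$, and operators on distinct parties commute: this is precisely what is needed for the standard argument to conclude that $\Phi|\psi\rangle=|{\rm junk}\rangle\otimes|\phi\rangle$, with $|\phi\rangle$ a state on $N$ qubits on which $\bar Z^{(i)}$ and $\bar X^{(i)}$ act as the honest Pauli operators $\hat Z^{(i)}$ and $\hat X^{(i)}$. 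Pushing Eq.~\eqref{eq_self_test_measurements} through $\Phi$ then yields the claimed measurement form $\hat\sigma_a^{(i)} = \hat Z^{(i)}\cos(a\pi/k)+\hat X^{(i)}\sin(a\pi/k)$ on the extracted qubits.

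It remains to prove that $|\phi\rangle$ lies in the singlet subspace, i.e.\ $\hat{\bf J}^2|\phi\rangle=0$, or equivalently $\hat J_\alpha|\phi\rangle=0$ for $\alpha=x,y,z$. Two of the three conditions are already at hand: the saturation of the SOS Eq.~\eqref{eq_SOS} forces $\bar S_z|\psi\rangle=\bar S_x|\psi\rangle=0$, which through $\Phi$ becomes $\hat J_z|\phi\rangle=0$ and $\hat J_x|\phi\rangle=0$. The third component comes for free from the angular-momentum commutator $[\hat J_z,\hat J_x]=i\hat J_y$: applied to $|\phi\rangle$ this gives $i\hat J_y|\phi\rangle=\hat J_z\hat J_x|\phi\rangle-\hat J_x\hat J_z|\phi\rangle=0$, so $\hat J_y|\phi\rangle=0$ as well, and hence $\hat{\bf J}^2|\phi\rangle=0$.

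The main obstacle I foresee is not any of the individual steps but the rigorous setup of the isometry: Lemma~\ref{lemma_paulis} only provides the Pauli-like relations on $|\psi\rangle$, not as operator identities, so one must deploy a swap construction whose correctness depends only on how $\bar Z^{(i)}$ and $\bar X^{(i)}$ act on the state (this is a well-understood ingredient of the self-testing literature). The extension from pure $|\psi\rangle$ to general mixed $\bar\rho$, needed for the theorem to certify membership in the whole exponentially degenerate singlet manifold rather than a single pure singlet, then follows by purifying $\bar\rho$ and applying the pure-state conclusion component-wise.
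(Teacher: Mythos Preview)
Your proposal is correct and follows essentially the same approach as the paper: the ``if'' direction via the SOS decomposition Eq.~\eqref{eq_SOS}, and the ``only if'' direction via Lemma~\ref{lemma_paulis}, a local swap isometry, the transfer of $\bar S_z|\psi\rangle=\bar S_x|\psi\rangle=0$ to the extracted qubits, and the commutator $[\hat J_z,\hat J_x]=i\hat J_y$ to get the third component. The only cosmetic differences are that the paper initializes the ancilla in $|+\rangle$ and writes the swap circuit explicitly (and phrases things for mixed $\bar\rho$ directly rather than via purification), while you leave the swap construction abstract and initialize in $|0\rangle$; these are standard equivalent variants.
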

\begin{proof} ($\Leftarrow$): For qubit measurements $\hat\sigma_a^{(i)} = \hat Z^{(i)} \cos \theta_a + \hat X^{(i)} \sin \theta_a$, we have ${\cal B} = 4\sum_{ab} M_{ab} \langle \hat J_a \hat J_b \rangle -N \sum_{ab}M_{ab} \cos(\theta_a - \theta_b)$, where $\hat J_a = (1/2)\sum_i \hat \sigma_a^{(i)}$ is a collective spin component. For a many-body singlet, we have $\langle \hat J_a \hat J_b \rangle = 0$ for any measurement directions.  Choosing $\theta_a = a\pi /k$, we obtain $ {\cal B}  = -(Nk/2) \sum_{ab}M_{ab}^2 = -(Nk/2){\rm Tr}(M^2) = -Nk$.

{ ($\Rightarrow$): To prove the self-testing statement of Theorem \ref{thm_self_test}, we use Lemma \ref{lemma_paulis} to construct the the so-called partial SWAP gate, the said isometry mapping each party onto a qubit.} As illustrated in Fig.~\ref{fig_swap}, an extra qubit in state $|+\rangle=(|0\rangle + |1\rangle)/\sqrt{2}$ is attached to each party, and we apply locally:
\begin{eqnarray}
	\Phi_i[\bar{\rho}_i] = {\rm Tr}_{\rm b.b}[\hat U^{(i)} (|+\rangle \langle +| \otimes \bar{\rho}_i) (\hat U^{(i)})^{\dagger}] \\
{\rm with }~~\hat U^{(i)} = (c {\bar X}^{(i)})(\hat H \otimes \bar{\mathbb{1}})(c{\bar Z}^{(i)}) \nonumber
\end{eqnarray}

where $\hat H$ is the Hadamard gate on the ancillary qubit, and ${\rm Tr}_{\rm b.b}$ denotes a trace over black box $i$, namely the local degrees of freedom which are not the qubit; and operator $c\bar{X}^{(i)}$ [resp.~$c\bar{Z}^{(i)}$] is a control-$\bar{X}^{(i)}$ gate [resp.~control-$\bar{Z}^{(i)}$], where the control is on the state of qubit $i$. The SWAP gate consists in applying the partial SWAP gate to each party: $\Phi = \otimes_{i=1}^N \Phi_i$.

\begin{figure}
	\includegraphics[width=\linewidth]{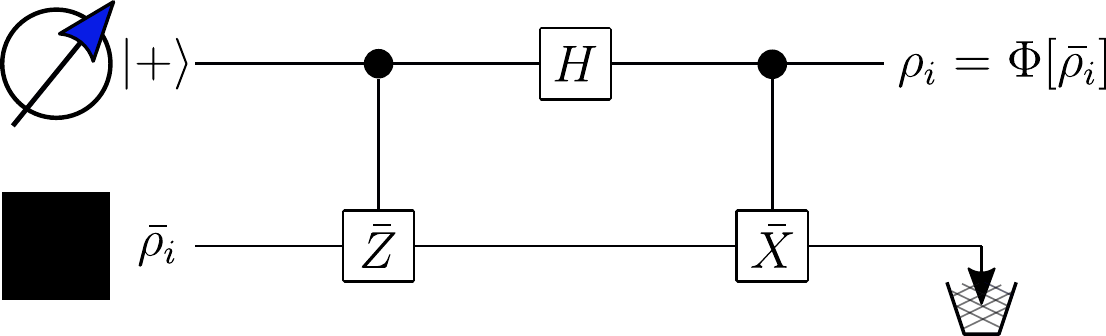}
	\caption{Partial SWAP gate. To each party (black box), initially in the (unknown) state $\bar{\rho_i}$, an ancillary qubit is attached and initialized in the state $|+\rangle=(|0\rangle + |1\rangle)/\sqrt{2}$. The partial SWAP gate is realized by successively applying a control-$\bar{Z}^{(i)}$ gate, a Hadamard gate on the qubit, and a control-$\bar{X}^{(i)}$ gate [$\bar{Z}^{(i)}$ and $\bar{X}^{(i)}$ are defined in Eq.~\eqref{eq_def_barZX}]. The gate is completed by tracing over the black-box degrees of freedom.
	}
	\label{fig_swap}
\end{figure}

{ Now, let $\bar{\rho}$ be a quantum state reaching the quantum bound of the BI. As discussed in \cite{supp_mat}, this implies that $\bar{Z}^{(i)}$ and $\bar{X}^{(i)}$ act as unitaries, so that $\Phi_i$ (and $\Phi$) are isometries; in the general case, $\bar{Z}^{(i)}$ and $\bar{X}^{(i)}$ must be properly regularized \cite{supp_mat}. We then define the $N$-qubit state $\hat \rho:= \Phi[\bar{\rho}]$, and show that $\hat \rho$ is a many-body singlet.} A key property is that applying $\bar{X}^{(i)}$ [resp.~$\bar{Z}^{(i)}$] to black box $i$ before applying the partial-SWAP gate is equivalent to applying the Pauli operators $\hat X^{(i)}$ [resp.~$\hat Z^{(i)}$] to qubit $i$ after the partial-SWAP gate: $\Phi[\bar{X}^{(i)}\bar{\rho}] = \hat X^{(i)} \hat \rho$ (and similarly for $Z$). This can be checked by direct inspection of the partial SWAP gate, using Lemma \ref{lemma_paulis}. Maximal violation of the BI also implies, via Eq.~\eqref{eq_SOS}, $\sum_{i=1}^{N} \bar{Z}^{(i)} \bar{\rho} = \bar{S}_z \bar{\rho} = 0 = \sum_{i=1}^{N} \bar{X}^{(i)}\bar{\rho} = \bar{S}_x \bar{\rho}$. The same property holds after the SWAP gate: $\hat J_x \hat \rho  = \hat J_z \hat \rho = 0$,
where $\hat J_x = \sum_{i=1}^N \hat X^{(i)}/2$ and $\hat J_z = \sum_{i=1}^N \hat Z^{(i)}/2$ are collective spin observables for the $N$ qubits. Since $[\hat J_z, \hat J_x]=i\hat J_y$, this implies in turn $\hat J_y \hat \rho=0$, or equivalently ${\rm Tr}[\hat {\bf J}^2 \hat \rho] = 0$. 
Hence, maximal quantum violation of the BI self-tests a many-body singlet, as announced as the main result of our paper. Additionally, the measurements are also self-tested: as a consequence of Eq.~\eqref{eq_self_test_measurements}, saturating the quantum bound implies that the $\bar \sigma_a^{(i)}$ operators act on the state like equispaced spin measurements in the $xz$ plane (a property already proved for the chained inequality \cite{Pearle1970,BraunsteinC1990} in the $N=2$ case \cite{supicetal2016}). This concludes the proof of Theorem \ref{thm_self_test}.
\end{proof}

\noindent\textit{Robustness.}
A self-testing statement can also be obtained for non-maximal quantum violations. A robust self-testing statement consists in establishing that if $\langle \bar {\cal B} \rangle + Nk \le kN \epsilon$, then it is possible to define global spin operators such that $\langle \hat J_z^2 + \hat J_x^2 \rangle \le f(N, \epsilon)$, with a certain function $f$ such that $f(N, 0) = 0$. In \cite{supp_mat}, we establish such a robustness statement, with a bound of the form:
\begin{equation}
\label{robbound}
	\langle \hat J_z^2 + \hat J_x^2 \rangle \le \frac{N^2 \epsilon}{4} \left[\sqrt{2/N} + \sqrt{r}\right]^2
\end{equation}
with $r \approx 752$ a numerical constant. For $N \to \infty$, the bound on $\langle \hat J_z^2 + \hat J_x^2 \rangle$ scales as $O(N^2\epsilon)$

It is an open question whether this scaling could be improved. When considering the qubit realisation leading to the maximal quantum violation, from Eq.~\eqref{eq_SOS}, it is clear that if one measures a $N$-qubit state along $\hat \sigma_a^{(i)} = \hat Z^{(i)} \cos(a\pi/k) + \hat{X}^{(i)} \sin(a\pi/k)$, then one obtains:
\begin{equation}
	{\cal B} + Nk = \frac{k}{2}\left\langle \left[\sum_{i=1}^N \hat Z^{(i)}\right]^2 + \left[\sum_{i=1}^N \hat X^{(i)}\right]^2 \right\rangle = 2k \langle \hat J_z^2 + \hat J_x^2 \rangle ~.
\end{equation}
In a real experiment, the preparation of a many-body singlet is not perfect, and we have $2\langle \hat J_z^2 + \hat J_x^2 \rangle = \epsilon N$ with a small, but finite $\epsilon$, yielding ${\cal B} + Nk = kN \epsilon$, that is a $O(N\epsilon)$ scaling. Although this scaling assumes noiseless qubit measurements, it suggests that the robustness bound~\eqref{robbound} might be improved, either using analytical techniques such as in \cite{supp_mat}, or resorting to numerical techniques~\cite{yangetal2014}.

\noindent\textit{Generalization.} The self-testing scheme we have described can be extended to certify more general coarse-grained features of the quantum state. Introducing arbitrary local phases $\phi_i$, one considers the following quantity:
\begin{equation}
	{\cal B} =  \frac{2}{k}\sum_{a,b=0}^{k-1} \sum_{i \neq j} \langle \bar\sigma_a^{(i)} \bar\sigma_b^{(j)} \rangle \cos[\pi(a-b)/k + \phi_i - \phi_j]  ~.
	\label{eq_BI_phases}
\end{equation}
In \cite{supp_mat}, we show that the above derivation presented for $\phi_i=0$ can be extended to arbitrary $\phi_i$'s. The quantum bound for Eq.~\eqref{eq_BI_phases} is also $-Nk$ and reaching it self-tests the property: $\sum_{i,j}\cos(\phi_i - \phi_j)\langle \hat X^{(i)}\hat X^{(j)} + \hat Z^{(i)} \hat Z^{(j)}\rangle=0$. The self-tested measurements are the same as for $\phi_i=0$ (namely: equally-spaced qubit measurements in the $xz$ plane). Equivalently, the BIs described by Eq.~\eqref{eq_BI_phases} allow self-testing a many-body singlet on which local rotations by angles $\phi_i$ around the $y$ axis have been performed, in the same measurement setting.

\noindent\textit{Conclusions.}
The certification of quantum many-body systems is a timely problem for which the device-independent framework provides useful tools, as the certification does not rely on any assumptions on the devices. Self-testing, possibly the strongest form of device-independent certification, has so far focused on certifying unique, modulo some local operations, quantum state and measurement operators. However, in a many-body context, it may be useful to consider coarse-grained variants of self-testing, in which only some physical properties of interest are certified from the observed statistics. In this work we demonstrate the validity of this approach by showing how to self-test, in a robust way, many-body singlets, defined by the sole condition ${\rm Tr}[\hat \rho (\hat J_x^2 + \hat J_y^2 + \hat J_z^2)] = 0$ with $\hat {\bf J}$ the collective spin. This very coarse-grain condition can be met by states which are statistical mixtures of exponentially-many orthogonal pure states, spanning the singlet manifold. To our knowledge, this represents the first self-testing example of this kind, {opening a new paradigm for the certification of many-body quantum devices. 

In itself, that such coarse-grain certification is simply possible represents an intriguing result. Our specific self-testing scheme could already be implemented in several platform able to implement individual qubit measurements, and to prepare many-body singlets (either by cooling a trapped Fermi gas interacting via a Hubbard-like Hamiltonian \cite{koepselletal2019,chiuetal2019}, or by adiabatically preparing the ground state of an antiferromagnetic Heisenberg model in a qubit ensemble \cite{Sun_2021}), which would demonstrate a supreme form of control over an entangled quantum many-body system. More generally, it is an outstanding question to understand the possibilities and limitations of this coarse-grain approach, and identify other relevant properties, of the form $\langle \hat O\rangle=0$ with $\hat O$ a positive semidefinite many-body observable, that can be robustly self-tested. In our example, a key aspect is that the matrix $M$, defining the BI via Eq.~\eqref{eq_BI_M_expr}, is a projector. This suggests to replace $M$ by a more general projector, or by a more general positive semidefinite matrix. A second extension is to introduce spatial modulation in the coefficients of the BI, and we presented a first result in this direction in Eq.~\eqref{eq_BI_phases}. 
}

\acknowledgments{We acknowledge support from the Government of Spain (FIS2020-TRANQI and Severo Ochoa CEX2019-000910-S), Generalitat de Catalunya (CERCA, AGAUR SGR 1381 and QuantumCAT), the Fundaci{\'o} Cellex and Fundaci{\'o} Mir-Puig through an ICFO-MPQ Postdoctoral Fellowship, ERC AdG CERQUTE, the AXA Chair in Quantum Information Science.}
\bibliography{biblio_1}

\appendix
\section{Robustness}
\label{app_robustness}

In this Appendix, we derive the robustness bound for the self-testing of the many-body singlet state.

\subsection{SWAP unitary}
The SWAP unitary is based on the operators $\bar{Z}^{(i)} = (2/k) \sum_{a=0}^{k-1}  \sigma_a^{(i)} \cos(a\pi/k)$ and $\bar{X}^{(i)} = (2/k) \sum_{a=0}^{k-1}   \sigma_a^{(i)}\sin(a\pi/k)$, where $i=1, \dots N$ labels the $i$-th party. To make the SWAP gate unitary, these operators must first be regularized in order to make them unitary. In the main text, we did not need to regularize these operators, as their unitary character was an implicit consequence of the maximal violation of the Bell's inequality. The regularization is obtained in the following manner. First, we notice that $\bar{Z}^{(i)}$ and $\bar{X}^{(i)}$ are hermitian by construction. They therefore admit the following eigenvalue decomposition: $\bar{Z}^{(i)} = \sum_\alpha P_\alpha \lambda_\alpha$ (and similarly for $\bar{X}^{(i)}$), where $\{P_\alpha\}$ are paiwise orthogonal projectors ($P_\alpha P_\beta = P_\alpha \delta_{\alpha, \beta}$) summing to the identity ($\sum_\alpha P_\alpha = \mathbb{1}$), and $\lambda_\alpha \in \mathbb{R}$ are the corresponding eigenvalues. We then define $\tilde{Z}^{(i)} = \sum_\alpha P_\alpha {\rm sign}(\lambda_\alpha)$ (with the convention ${\rm sign}(0) = 1$), which are unitary operators ($[\tilde{Z}^{(i)}]^2 = \mathbb{1}$, and similarly $[\tilde{X}^{(i)}]^2 = \mathbb{1}$). \\
Adding an ancillary qubit initialized in $|+\rangle_i$ to each party, the partial SWAP gate is defined as $\Phi_i = (c\tilde{X}^{(i)})(H \otimes \mathbb{1})(c\tilde{Z}^{(i)})$, where $H$ is the Hadamard gate on the qubit, and $cA^{(i)}$ is the control-$A^{(i)}$ gate (with $A=X,Z$), where the control is on the qubit state. $\Phi_i$ is a unitary transformation. The full SWAP gate is obtained by applying the partial SWAP gate to all parties: $U_\Phi = \otimes_{i=1}^N \Phi_i$, which is also unitary. Notice that the definition of the SWAP gate used in the robustness proof is slightly different from the definition used in the main text: in the main text, we introduced a partial trace over the black-box degrees of freedom. Here, in contrast, we do not take the partial trace, and consider instead a unitary transformation. The two definitions lead however to equivalent statements. 

\subsection{Definition of the robustness problem}
We define the collective spin observables for the qubits: 
\begin{eqnarray}
	S_x := \sum_{i=1}^N X^{(i)} \otimes \mathbb{1} \\	
	S_z := \sum_{i=1}^N Z^{(i)} \otimes \mathbb{1}
\end{eqnarray}
where $X^{(i)}$ and $Z^{(i)}$ are the local Pauli matrices, and the identity operator acts on the black-box degrees of freedom. Similarly, we define the operators: 
\begin{eqnarray}
	\bar{S}_x := \sum_{i=1}^N  \mathbb{1} \otimes \bar{X}^{(i)} \\	
	\bar{S}_z := \sum_{i=1}^N  \mathbb{1} \otimes \bar{Z}^{(i)}
\end{eqnarray}
where here the identity acts on the ancillary qubits. The shifted Bell operator is:
\begin{equation}
	{\cal B} + Nk = \frac{k}{2} (\bar{S}_x^2 + \bar{S}_z^2) 
	+ \sum_{i=1}^N ({\bm \sigma}^T [\mathbb{1} - M] {\bm \sigma} )^{(i)}  ~.
	\label{eq_SOS_appendix}
\end{equation}
If the violation of the Bell's inequality is such that $\langle {\cal B} + Nk \rangle \le Nk\epsilon$, then we have that $\langle \bar{S}_x^2 + \bar{S}_z^2 \rangle \le 2N\epsilon$. More explicitly, denoting as $|\Psi\rangle$ the (unknown) quantum state of the black-box degrees of freedom, we have:
\begin{equation}
	\langle \Psi | \langle +|^{\otimes N} (\bar{S}_x^2 + \bar{S}_z^2)|+\rangle^{\otimes N} |\Psi \rangle \le 2N\epsilon ~.
\end{equation}
The goal is in turn to establish a non-trivial bound on $\langle (S_x^2 + S_z^2) \rangle$ after application of the SWAP gate, namely to bound the quantity ${\cal Q}$:
\begin{equation}
	{\cal Q} := \langle \Psi | \langle +|^{\otimes N} U_\Phi^\dagger (S_x^2 + S_z^2)U_\Phi |+\rangle^{\otimes N} |\Psi \rangle \le ~?
\end{equation}
\textit{Strategy of the proof.} The strategy is to show that the operator $S_x^{(\Phi)} := U_\Phi^\dagger S_x U_\Phi$ is ``close''  to the operator $\bar{S}_x$ (and similarly for $S_z^{(\Phi)} := U_\Phi^\dagger S_z U_\Phi$ and $\bar{S}_z$), in some sense to define, and to use some form of the triangle inequality to obtain a non-trivial bound. Specifically, defining $\langle \cdot \rangle := \langle \Psi | \langle +|^{\otimes N} \cdot|+\rangle^{\otimes N} |\Psi \rangle$, we use that for any operators $A_1$ and $A_2$: $\langle A_1^\dagger A_1 \rangle = \langle A_2^\dagger A_2 \rangle + \langle (A_1-A_2)^\dagger (A_1-A_2) \rangle	+ \langle A_2^\dagger (A_1-A_2) \rangle + \langle (A_1-A_2)^\dagger A_1 \rangle$.
Using the Cauchy-Schwarz inequality, we have $|\langle A_2^\dagger (A_1-A_2) \rangle| \le \sqrt{\langle A_2^\dagger A_2 \rangle \langle (A_1 - A_2)^\dagger (A_1-A_2) \rangle}$, so that we have $\langle A_1^\dagger A_1 \rangle \le [\sqrt{\langle A_2^\dagger A_2 \rangle} + \sqrt{\langle (A_1-A_2)^\dagger (A_1-A_2) \rangle}]^2$. Applying this inequality with $(A_1,A_2)=(S_x^{(\Phi)}, \bar{S}_x)$ and $(A_1,A_2)=(S_z^{(\Phi)}, \bar{S}_z)$, we obtain:
\begin{eqnarray}
	{\cal Q} = \langle [S_x^{(\Phi)}]^2 + [S_z^{(\Phi)}]^2 \rangle \le \left[\sqrt{\langle \bar{S}_x^2 + \bar{S}_z^2 \rangle} \right. \nonumber \\
	+ \left. \sqrt{\langle [S_x^{(\Phi)}-\bar{S}_x]^2 + [S_z^{(\Phi)}-\bar{S}_z]^2 \rangle} \right]^2
	\label{eq_quantity_Q}
\end{eqnarray}
We already know that $\langle \bar{S}_x^2 + \bar{S}_z^2 \rangle \le 2N\epsilon$, so that it remains to bound $\langle [S_x^{(\Phi)}-\bar{S}_x]^2 + [S_z^{(\Phi)}-\bar{S}_z]^2 \rangle$. We will find a bound of the form $\langle [S_x^{(\Phi)}-\bar{S}_x]^2 + [S_z^{(\Phi)}-\bar{S}_z]^2 \rangle \le rN^2 \epsilon $ for some numerical factor $r \approx 752$, which implies a robustness statement of the form:
\begin{equation}
	{\cal Q} \le  \left[\sqrt{2N\epsilon} + \sqrt{rN^2 \epsilon}\right]^2 \approx rN^2\epsilon ~.
\end{equation}

\subsection{Bounding total spin fluctuations after the SWAP gate}
One has to evaluate the squared norm of: $[S_z^{(\Phi)} - \bar{S}_z] |+\rangle^{\otimes N} |\Psi\rangle$, and of $[S_x^{(\Phi)} - \bar{S}_x] |+\rangle^{\otimes N} |\Psi\rangle$.\\

\subsubsection{Two useful Lemmas}
To derive the robustness bound, we shall make extensive use of two inequalities:
\begin{lemma}
For any operators $A_a^{(i)}$ and any vector $|v\rangle$, one has:
\begin{equation}
	\sum_i ||\sum_a A_a^{(i)}|v\rangle||^2 \le \left[\sum_a \sqrt{\sum_i ||A_a^{(i)} |v \rangle||^2} \right]^2 \nonumber
\end{equation}
\label{lemma1}
\end{lemma}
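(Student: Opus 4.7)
The plan is to reduce the claimed inequality to two standard triangle-type inequalities applied in sequence. Write $c_{ai} := \|A_a^{(i)}|v\rangle\|$, viewed as a nonnegative array indexed by the measurement label $a$ and the party label $i$. The target inequality then has the schematic form
\begin{equation}
\sum_i \Bigl\| \sum_a A_a^{(i)}|v\rangle \Bigr\|^2 \;\le\; \Bigl[\sum_a \sqrt{\sum_i c_{ai}^2}\Bigr]^2,
\end{equation}
and it is natural to bound the left-hand side in two steps: first eliminate the inner sum over $a$ using the Hilbert-space triangle inequality, and then compare the resulting $\ell^2$-norm of a sum of $a$-vectors with the sum of their $\ell^2$-norms using Minkowski.

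Concretely, in the first step I would use that for each fixed party $i$, the Hilbert-space triangle inequality gives $\|\sum_a A_a^{(i)}|v\rangle\| \le \sum_a c_{ai}$. Squaring and summing over $i$ yields $\sum_i \|\sum_a A_a^{(i)}|v\rangle\|^2 \le \sum_i (\sum_a c_{ai})^2$. In the second step, regard the nonnegative numbers $c_{ai}$ as defining, for each $a$, a vector $\mathbf{c}_a = (c_{ai})_i$ in $\mathbb{R}^N$ with the standard $\ell^2$ norm $\|\mathbf{c}_a\|_2 = \sqrt{\sum_i c_{ai}^2}$. The quantity $\sqrt{\sum_i (\sum_a c_{ai})^2}$ is exactly $\|\sum_a \mathbf{c}_a\|_2$, and Minkowski's inequality for the $\ell^2$ norm gives $\|\sum_a \mathbf{c}_a\|_2 \le \sum_a \|\mathbf{c}_a\|_2$. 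Squaring both sides produces the right-hand side of the claimed inequality.

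Assembling the two steps yields the lemma. There is really no main obstacle: the only point to be careful about is that the first (Hilbert-space) triangle inequality must come before the second (Euclidean/Minkowski) step, since the vectors $A_a^{(i)}|v\rangle$ live in a Hilbert space and one must convert to nonnegative scalars $c_{ai}$ before invoking Minkowski. A one-line alternative is to recognize the whole statement as Minkowski's inequality for the mixed norm $\ell^1_a(\ell^2_i(\mathcal{H}))$, but presenting it as two successive triangle inequalities is more elementary and avoids introducing extra notation.
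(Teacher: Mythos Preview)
Your proof is correct and follows essentially the same approach as the paper: both first apply the Hilbert-space triangle inequality for each fixed $i$ to reduce to the scalars $c_{ai}=\|A_a^{(i)}|v\rangle\|$, and then control $\sum_i(\sum_a c_{ai})^2$. The only cosmetic difference is that the paper expands the square as a double sum over $a,b$ and bounds each cross term $\sum_i c_{ai}c_{bi}$ by Cauchy--Schwarz, whereas you package that second step as Minkowski's inequality for the $\ell^2$ norm; these are the same argument under different names.
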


\begin{proof}
	The proof is elementary. We first apply the triangle inequality to obtain: $||\sum_a A_a^{(i)}|v\rangle||^2 \le \sum_{a,b} [||A_a^{(i)} |v\rangle||\times ||A_b^{(i)} |v\rangle||]$. Using then Cauchy-Schwarz inequality, we have that $\sum_i [||A_a^{(i)} |v\rangle||\times ||A_b^{(i)} |v\rangle||] \le \sqrt{\sum_i ||A_a^{(i)} |v \rangle||^2} \sqrt{\sum_i ||A_b^{(i)} |v \rangle||^2}$, from which we obtain the inequality of Lemma~\ref{lemma1}.
\end{proof}

\begin{lemma}
For any operators $A_i$ and any vector $|v\rangle$, one has:
\begin{equation}
	||\sum_{i=1}^K A_i |v \rangle ||^2 \le K \sum_{i=1}^K ||A_i |v \rangle ||^2 \nonumber
\end{equation}
\label{lemma2}
\end{lemma}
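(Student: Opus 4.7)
The plan is to reduce the stated bound to a combination of the triangle inequality and the discrete Cauchy--Schwarz inequality. Introducing the shorthand $|w_i\rangle := A_i |v\rangle$, the claim becomes the standard vector-space estimate $||\sum_{i=1}^K |w_i\rangle ||^2 \le K\sum_{i=1}^K |||w_i\rangle||^2$, valid for any finite collection of vectors in any Hilbert space, with no structural assumption on the operators $A_i$ needed.

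First, I would apply the triangle inequality in the form $||\sum_{i=1}^K |w_i\rangle|| \le \sum_{i=1}^K |||w_i\rangle||$, reducing the problem to bounding the sum of the individual norms. Second, I would apply the Cauchy--Schwarz inequality to the $K$-tuples $(1,\ldots,1)$ and $(|||w_1\rangle||,\ldots,|||w_K\rangle||)$ in $\mathbb{R}^K$, obtaining $\sum_{i=1}^K |||w_i\rangle|| \le \sqrt{K}\,\sqrt{\sum_{i=1}^K |||w_i\rangle||^2}$. Squaring the resulting chain of inequalities yields exactly the announced bound.

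An equivalent, slightly more direct route is to expand $||\sum_i |w_i\rangle||^2 = \sum_{i,j}\langle w_i|w_j\rangle$ and bound each cross term with the elementary inequality $|\langle w_i|w_j\rangle| \le \tfrac{1}{2}(|||w_i\rangle||^2 + |||w_j\rangle||^2)$, itself a consequence of expanding $0 \le |||w_i\rangle \pm |w_j\rangle||^2$. Summing over the $K^2$ pairs $(i,j)$ reproduces the factor $K$ on the right-hand side. No real obstacle is anticipated: the lemma is essentially a restatement of Cauchy--Schwarz (or equivalently the convexity of $x\mapsto x^2$) adapted to the vector-valued setting, and its role in the remainder of the robustness argument will be to redistribute the norm-squared of a sum of $K$ terms into a sum of norm-squareds, at the price of a multiplicative factor $K$.
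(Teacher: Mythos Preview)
Your proof is correct and follows essentially the same approach as the paper: triangle inequality first, then a power-mean type bound to pass from $(\sum_i \|A_i|v\rangle\|)^2$ to $K\sum_i \|A_i|v\rangle\|^2$. The only cosmetic difference is that the paper phrases the second step as the concavity of the square root (Jensen), whereas you phrase it as Cauchy--Schwarz against the all-ones vector; these are the same inequality.
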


\begin{proof}
	The proof is elementary. We first use the triangle inequality: $||\sum_{i=1}^K A_i |v \rangle ||^2 \le [\sum_{i=1}^K ||A_i |v \rangle ||]^2 = K^2 [\sum_{i=1}^K (1/K)||A_i |v \rangle ||]^2$. We then use the concavity of the square root function: for all $x_i$, $\sum_{i=1}^K \sqrt{x_i}/K \le \sqrt{\sum_{i=1}^K x_i /K}$. Applying this inequality for $x_i = ||A_i |v \rangle||$, we conclude the proof of Lemma \ref{lemma2}.
\end{proof}

\subsubsection{First inequality}
We have: $S_z^{(\Phi)} = U_\Phi^\dagger [\sum_i Z^{(i)} \otimes \mathbb{1}] U_\Phi = \sum_i \Phi_i^\dagger [Z^{(i)} \otimes \mathbb{1}] \Phi_i$. 
Hence, we have $S_z^{(\Phi)} - \bar{S}_z = \sum_{i=1}^N \{\Phi_i^\dagger [Z^{(i)} \otimes \mathbb{1}] \Phi_i - \mathbb{1}\otimes \bar{Z}^{(i)}\}$ and similarly $S_x^{(\Phi)} - \bar{S}_x = \sum_{i=1}^N \{\Phi_i^\dagger [X^{(i)} \otimes \mathbb{1}] \Phi_i - \mathbb{1}\otimes\bar{X}^{(i)}\}$.

Elementary manipulations of the partial SWAP gate show that:
\begin{equation}
	\Phi_i^\dagger [Z^{(i)} \otimes \mathbb{1}] \Phi_i |+\rangle_i |\Psi \rangle = |+\rangle_i \tilde{Z}^{(i)}|\Psi \rangle 
\end{equation}
And similarly that:
\begin{equation}
	\Phi_i^\dagger [X^{(i)} \otimes \mathbb{1}] \Phi_i |+\rangle_i |\Psi \rangle = \frac{1}{\sqrt{2}}[ |0\rangle_i \tilde{X}^{(i)} - |1\rangle_i \tilde{Z}^{(i)} \tilde{X}^{(i)} \tilde{Z}^{(i)}]|\Psi \rangle
\end{equation}
Applying then Lemma \ref{lemma2} with $A_i = \Phi_i^\dagger [Z^{(i)} \otimes \mathbb{1}] \Phi_i - \mathbb{1}\otimes \bar{Z}^{(i)}$ and $|v\rangle = |+\rangle^{\otimes N} |\Psi\rangle$, we have $||A_i|v\rangle||^2 = ||(\tilde{Z} - \bar{Z})^{(i)}|\Psi \rangle||^2$, and hence:
\begin{equation}
	\langle [S_z^{(\Phi)} - \bar{S}_z]^2\rangle \le N \sum_{i=1}^N ||(\tilde{Z} - \bar{Z})^{(i)}|\Psi \rangle||^2
\end{equation}
Similarly, applying Lemme \ref{lemma2} with $A_i = \Phi_i^\dagger [X^{(i)} \otimes \mathbb{1}] \Phi_i - \mathbb{1}\otimes \bar{X}^{(i)}$ and $|v\rangle = |+\rangle^{\otimes N} |\Psi\rangle$, we have $||A_i|v\rangle||^2 = (1/2)|||0\rangle_i(\bar{X} - \tilde{X})^{(i)}|\Psi \rangle + |1\rangle_i(\bar{X} + \tilde{Z} \tilde{X} \tilde{Z})^{(i)}|\Psi \rangle||^2 = (1/2)||(\bar{X} - \tilde{X})^{(i)}|\Psi \rangle||^2 + (1/2)||(\bar{X} + \tilde{Z} \tilde{X} \tilde{Z})^{(i)}|\Psi \rangle||^2$. Hence, by Lemma \ref{lemma2}:
\begin{eqnarray}
	\langle [S_x^{(\Phi)} - \bar{S}_x]^2\rangle \le \frac{N}{2} \sum_{i=1}^N \left[||(\bar{X} - \tilde{X})^{(i)}|\Psi \rangle||^2 \right.\nonumber \\ 
	\left.+ ||(\bar{X} + \tilde{Z} \tilde{X} \tilde{Z})^{(i)}|\Psi \rangle||^2\right]
\end{eqnarray}

\subsubsection{Removing the regularized operators}
As a second step, we derive inequalities which do not involve the regularized (unitary) operators $\tilde{X}^{(i)}$ and $\tilde{X}^{(i)}$. Specifically, since $\tilde{Z}^{(i)}$ is unitary and $[\tilde{Z}^{(i)}]^2=\mathbb{1}$, we have $||(\tilde{Z} - \bar{Z})^{(i)}|\Psi \rangle|| = ||(\mathbb{1} - \tilde{Z}\bar{Z})^{(i)}|\Psi \rangle||$. Then, we use that $\mathbb{1} - \tilde{Z}\bar{Z} = \mathbb{1} - |\bar{Z}| \le (\mathbb{1} - |\bar{Z}|)(\mathbb{1} + |\bar{Z}|) = \mathbb{1} - \bar{Z}^2$. Therefore: $||(\tilde{Z} - \bar{Z})^{(i)}|\Psi \rangle||^2 \le ||(\mathbb{1} - \bar{Z}^2)^{(i)}|\Psi \rangle||^2$. Similarly, we have that $||(\tilde{X} - \bar{X})^{(i)}|\Psi \rangle||^2 \le ||(\mathbb{1} - \bar{X}^2)^{(i)}|\Psi \rangle||^2$. We also have that $||(\bar{X} + \tilde{Z} \tilde{X} \tilde{Z})^{(i)}|\Psi \rangle|| = ||(\bar{X}\tilde{Z} + \tilde{Z} \tilde{X} )^{(i)}|\Psi \rangle||$. Inserting $\tilde{Z}=\bar{Z} + \tilde{Z} - \bar{Z}$ and $\tilde{X}=\bar{X} + \tilde{X} - \bar{X}$, we obtain $\bar{X}\tilde{Z} + \tilde{Z} \tilde{X} = \bar{X}\bar{Z} + \bar{Z} \bar{X} + (\tilde{Z} - \bar{Z}) \bar{X} + (\tilde{X} - \bar{X}) \bar{Z} + \tilde{X}(\tilde{Z} - \bar{Z})$. Applying then Lemma \ref{lemma1}, we have:
\begin{eqnarray}
	&\sum_{i=1}^N||(\bar{X} + \tilde{Z} \tilde{X} \tilde{Z})^{(i)}|\Psi \rangle||^2 \le& \left[
		\sqrt{\sum_i ||(\bar{Z}\bar{X} + \bar{X}\bar{Z})^{(i)}|\Psi \rangle||^2} \right. \nonumber \\
		&+ \sqrt{\sum_i ||[(\tilde{Z} - \bar{Z})\bar{X}]^{(i)}|\Psi \rangle||^2}  
		&+ \sqrt{\sum_i ||[(\tilde{X} - \bar{X})\bar{Z}]^{(i)}|\Psi \rangle||^2}  \nonumber \\
		&+\left. \sqrt{\sum_i ||[\tilde{X}(\tilde{Z} - \bar{Z})]^{(i)}|\Psi \rangle||^2} \right]^2&
\end{eqnarray}
We then use the fact that $\tilde{X}^{(i)}$ is unitary (implying that $||\tilde{X}^{(i)}||=1$), and bound the norm of operators $\bar{X}^{(i)}$ and $\bar{Z}^{(i)}$ in the following way:
 $||\bar{X}^{(i)}|| = \frac{2}{k} ||\sum_{a=0}^{k-1} \sin(a\pi/k) \sigma_a^{(i)} ||
  \le  \frac{2}{k} \sum_{a=0}^{k-1} \sin(a\pi/k)  =  \frac{\sin(\pi/k)}{k\sin^2[\pi / (2k)]}  \le 4 / \pi$. By a similar computation, we have $||\bar{Z}^{(i)}|| = \frac{2}{k} ||\sum_{a=0}^{k-1} \cos(a\pi/k) \sigma_a^{(i)} ||
  \le  \frac{2}{k} \sum_{a=0}^{k-1} |\cos(a\pi/k)|  =\frac{2\sin[(\pi/k)\lfloor k/2 \rfloor - 1/2]}{k\sin[\pi / (2k)]} \le 4 / \pi$. Using these bounds, we obtain:
\begin{eqnarray}
	\sum_{i=1}^N||(\bar{X} + \tilde{Z} \tilde{X} \tilde{Z})^{(i)}|\Psi \rangle||^2 &\le& \left[
		\sqrt{\sum_i ||(\bar{Z}\bar{X} + \bar{X}\bar{Z})^{(i)}|\Psi \rangle||^2} \right. \nonumber \\
		&+ &\left(1 + \frac{4}{\pi}\right) \sqrt{\sum_i ||(\tilde{Z} - \bar{Z})^{(i)}|\Psi \rangle||^2}  \nonumber \\
		  &+ & \left. \frac{4}{\pi}\sqrt{\sum_i ||(\tilde{X} - \bar{X})^{(i)}|\Psi \rangle||^2}  \right]^2
\end{eqnarray}

Putting everything together, we finally obtain:
\begin{equation}
	\langle [S_z^{(\Phi)} - \bar{S}_z]^2\rangle \le N \sum_{i=1}^N ||(\mathbb{1}- \bar{Z}^2)^{(i)}|\Psi \rangle||^2
\end{equation}
\begin{eqnarray}
	\langle [S_x^{(\Phi)} - \bar{S}_x]^2\rangle & \le & \frac{N}{2} \sum_{i=1}^N||(\mathbb{1} - \tilde{X}^2)^{(i)}|\Psi \rangle||^2\nonumber \\ 
	&+& \frac{N}{2}\left[
		\sqrt{\sum_{i=1}^N ||(\bar{Z}\bar{X} + \bar{X}\bar{Z})^{(i)}|\Psi \rangle||^2} \right. \nonumber \\
		&+ &\left(1 + \frac{4}{\pi}\right) \sqrt{\sum_{i=1}^N ||(\mathbb{1} - \bar{Z}^2)^{(i)}|\Psi \rangle||^2}  \nonumber \\
		  &+ & \left. \frac{4}{\pi}\sqrt{\sum_{i=1}^N ||(\mathbb{1} - \bar{X}^2)^{(i)}|\Psi \rangle||^2}  \right]^2
\end{eqnarray}

\subsubsection{Form of the robustness bound}
Our next step will be to find constants $\alpha_0$ and $\alpha_1$ such that:
\begin{eqnarray}
	\sum_{i=1}^N ||(\bar{Z}\bar{X} + \bar{X}\bar{Z})^{(i)}|\Psi \rangle||^2 \le N\epsilon \alpha_0 \\
	\sum_{i=1}^N ||(\mathbb{1} - \bar{Z}^2)^{(i)}|\Psi \rangle||^2 \le N\epsilon \alpha_1 \\
	\sum_{i=1}^N ||(\mathbb{1} - \bar{X}^2)^{(i)}|\Psi \rangle||^2 \le N\epsilon \alpha_1 
\end{eqnarray}
This will imply the following bounds:
\begin{eqnarray}
	\langle [S_z^{(\Phi)} - \bar{S}_z]^2\rangle \le N^2 \epsilon \alpha_1 \\
	\langle [S_x^{(\Phi)} - \bar{S}_x]^2\rangle \le \frac{N^2\epsilon}{2}\left\{ \nonumber \right. \\
		\left. \alpha_1 + [\sqrt{\alpha_0} + (1 + 8/\pi)\sqrt{\alpha_1}]^2 \right\} ~.
\end{eqnarray}

Finding the constants $\alpha_0$ and $\alpha_1$ will be achived using the condition $\langle {\cal B} + Nk \rangle \le Nk\epsilon$, which implies, via the SOS decomposition of Eq.~\eqref{eq_SOS_appendix}, the following bound:
\begin{equation}
	\sum_{i=1}^N \langle \Psi | ({\bm \sigma}^T [\mathbb{1} - M] {\bm \sigma} )^{(i)}|\Psi \rangle \le Nk \epsilon ~. \label{eq_ineq_SOS}
\end{equation}
We will use this inequality to derive bounds of the form: 
\begin{eqnarray}
	\sum_{j=1}^N ||(\mathbb{1} - \frac{\bar{Z}^2 + \bar{X}^2}{2})^{(j)}|\Psi \rangle||^2 \le N\epsilon 
		\label{eq_ineq_from_SOS_1}\\
	\sum_{j=1}^N ||[(\bar{Z} \pm i \bar{X})^{(j)}]^2|\Psi \rangle||^2 \le N\epsilon \alpha
	\label{eq_ineq_from_SOS_2}
\end{eqnarray}
for a certain constant $\alpha$. Introducing the notations $A_0^{(j)} = (\mathbb{1} - \frac{\bar{Z}^2 + \bar{X}^2}{2})^{(j)}$ and $A_{\pm}^{(j)} = [(\bar{Z} \pm i \bar{X})^{(j)}]^2 = [\bar{Z}^2 - \bar{X}^2 \pm i(\bar{X}\bar{Z} + \bar{Z} \bar{X})]^{(j)}$, we then use the decompositions: $[\mathbb{1} - \bar{Z}^2]^{(j)} = [A_0 - \frac{A_+ + A_-}{4}]^{(j)}$; $[\mathbb{1} - \bar{X}^2]^{(j)} = [A_0 + \frac{A_+ + A_-}{4}]^{(j)}$ and $[\bar{Z} \bar{X} + \bar{X} \bar{Z}]^{(j)} = [\frac{A_+ - A_-}{2i}]^{(j)}$. Applying then Lemma \ref{lemma1}, we obtain $\alpha_0 = \frac{1}{4}(\sqrt{\alpha} + \sqrt{\alpha})^2 = \alpha$, and similarly $\alpha_1 = (1 + \sqrt{\alpha}/2)^2$. In the next subsection, we will establish Eqs.~\eqref{eq_ineq_from_SOS_1}-\eqref{eq_ineq_from_SOS_2}. 

\subsubsection{Exploiting the SOS decomposition of the Bell's inequality}
We first use the fact that $({\bm \sigma}^T [\mathbb{1} - M] {\bm \sigma} )^{(i)} = k\mathbb{1} - (k/2)(\tilde{X}^2 + \tilde{Z}^2)^{(i)}$. The matrix $M$ being a projector, $1-M$ is also a projector, so that we have $0 \le \mathbb{1} - (1/2)(\tilde{X}^2 + \tilde{Z}^2)^{(i)} \le \mathbb{1}$. This implies that $\langle \langle \Psi | [\mathbb{1} - (1/2)(\tilde{X}^2 + \tilde{Z}^2)^{(i)}]|\Psi \rangle \le ||\mathbb{1} - (1/2)(\tilde{X}^2 + \tilde{Z}^2)^{(i)}|\Psi \rangle||^2$. Therefore, from Eq.~\eqref{eq_ineq_SOS}, we have the announced inequality Eq.~\eqref{eq_ineq_from_SOS_1}.
To obtain Eq.~\eqref{eq_ineq_from_SOS_2}, we first use the property $(1-M)^2=(1-M)$, together with the explicit expression of the matrix $M_{ab} = (2/k)\cos[(a-b)\pi/k]$, to rewrite Eq.~\eqref{eq_ineq_SOS} as:
\begin{equation}
	\sum_{j=1}^N \sum_{a=0}^{k-1} ||\{\sigma_a^{(j)} - \frac{2}{k}\sum_{b=0}^{k-1}\cos\left(\pi \frac{a-b}{k}\right)\sigma_b^{(j)}\} |\Psi \rangle ||^2 \le Nk\epsilon ~.
\end{equation}
We then decompose: 
\begin{eqnarray}
	\frac{2}{k}\sum_{b=0}^{k-1}\cos\left(\pi \frac{a-b}{k}\right)\sigma_b^{(j)} = \nonumber \\ \frac{e^{ia\pi/k}}{2}(\bar{Z} - i\bar{X})^{(j)} + \frac{e^{-ia\pi/k}}{2}(\bar{Z} + i\bar{X})^{(j)} ~.
\end{eqnarray}
We introduce the notation:
\begin{equation}
\label{eq_def_Ra}
	R_a^{(j)} :=2 \sigma_a^{(j)} e^{ia\pi/k} - (\bar{Z} + i\bar{X})^{(j)} ~,
\end{equation}
so that Eq.~\eqref{eq_ineq_SOS} can be rewritten as:
\begin{equation}
	\sum_{j=1}^N \sum_{a=0}^{k-1} ||R_a^{(j)} - e^{2ia\pi/k}(\bar{Z}-i\bar{X})^{(j)}|\Psi \rangle ||^2 \le 4Nk\epsilon ~.
\end{equation}
We then introduce the operators:
\begin{equation}
	A_a^{(j)} := R_a^{(j)} R_a^{(j)} - R_a^{(j)}e^{2ia\pi/k}(\bar{Z}-i\bar{X})^{(j)} ~.
\end{equation}
We note that the operator norm of $R_a^{(j)}$ is bounded. Indeed, we have $(\bar{Z} + i\bar{X})^{(j)} = (2/k) \sum_{a=0}^{k-1} e^{ia\pi/k} \sigma_a^{(j)}$. Hence, we have $||(\bar{Z} + i\bar{X})^{(j)}|| \le 2$ (by the triangle inequality, and the property $||\sigma_a^{(j)}||=1$). Consequently, we have $||R_a^{(j)}|| \le 4$. Therefore, we have the inequality:
\begin{equation}
	\sum_{j=1}^N \sum_{a=0}^{k-1} ||A_a^{(j)}|\Psi \rangle||^2 \le 64 Nk\epsilon ~.
\end{equation}
We have the property that: 
\begin{equation}
	\sum_{a=0}^{k-1} R_a^{(j)} R_a^{(j)} = -k[\bar{Z}^{(j)} + i\bar{X}^{(j)}]^2 ~,
\end{equation}
which implies that:
\begin{equation}
	-k[\bar{Z}^{(j)} + i\bar{X}^{(j)}]^2 = \sum_{a=0}^{k-1} A_a^{(j)} + \sum_{a=0}^{k-1}R_a^{(j)}e^{2ia\pi/k}(\bar{Z}-i\bar{X})^{(j)} ~.
\end{equation}
From Lemma \ref{lemma1}, we have the following bound:
\begin{eqnarray}
	\sum_{j=1}^N ||k[\bar{Z}^{(j)} + i\bar{X}^{(j)}]^2|\Psi \rangle||^2 \le \left[ 
	\sqrt{\sum_{j=1}^N||\sum_{a=0}^{k-1} A_a^{(j)}|\Psi \rangle||^2} \right. \nonumber \\
	\left. + \sqrt{\sum_{j=1}^N||\sum_{a=0}^{k-1} R_a^{(j)}e^{2ia\pi/k}(\bar{Z}-i\bar{X})^{(j)}|\Psi \rangle||^2} \right]^2 \nonumber ~.
\end{eqnarray}
Furthermore, from Lemma \ref{lemma2}, we have the bound:
\begin{eqnarray}
	\sum_{j=1}^N||\sum_{a=0}^{k-1} A_a^{(j)}|\Psi \rangle||^2 &\le& \sum_{j=1}^Nk\sum_{a=0}^{k-1}||A_a^{(j)}|\Psi \rangle||^2 \nonumber \\
	&\le & 64 Nk^2\epsilon
\end{eqnarray}
It remains to bound $\sum_{j=1}^N||\sum_{a=0}^{k-1} R_a^{(j)}e^{2ia\pi/k}(\bar{Z}-i\bar{X})^{(j)}|\Psi \rangle||^2$. First, we decompose $R_a^{(j)} = e^{2ia\pi/k}(\bar{Z} - i\bar{X})^{(j)} + R_a^{(j)} - e^{2ia\pi/k}(\bar{Z} - i\bar{X})^{(j)}$. Therefore, we have:
\begin{eqnarray}
	\sum_{a=0}^{k-1} R_a^{(j)}e^{2ia\pi/k}(\bar{Z}-i\bar{X})^{(j)} = \sum_{a=0}^{k-1} e^{4ia\pi/k}[(\bar{Z}-i\bar{X})^{(j)}]^2 \nonumber \\
	+ \sum_{a=0}^{k-1} e^{2ia\pi/k}[R_a - e^{2ia\pi/k}(\bar{Z}-i\bar{X})]^{(j)} (\bar{Z}-i\bar{X})^{(j)} \nonumber
\end{eqnarray} 
We have $\sum_{a=0}^{k-1} e^{4ia\pi/k} = 0$ for all $k\ge 3$. Furthermore, using Lemma \ref{lemma2} and the bound $||(\bar{Z}-i\bar{X})^{(j)} || \le 2$, we conclude that:
\begin{eqnarray}
	\sum_{i=1}^N||\sum_{a=0}^{k-1} e^{2ia\pi/k} R_a^{(j)}(\bar{Z}-i\bar{X})^{(j)} |\Psi \rangle ||^2 \le \nonumber 
	\\
	\sum_{i=1}^N 4k \sum_{a=0}^{k-1} ||[R_a - e^{2ia\pi/k}(\bar{Z}-i\bar{X})]^{(j)}|\Psi \rangle ||^2  \nonumber \\
	\le  16Nk^2 \epsilon
\end{eqnarray}
Putting everything together, we have:
\begin{equation}
	\sum_{j=1}^N ||[\bar{Z}^{(j)} + i\bar{X}^{(j)}]^2|\Psi \rangle||^2 \le 144 N \epsilon ~.
\end{equation}
Exchanging the role of $i$ and $-i$ from Eq.~\eqref{eq_def_Ra} onwards, one establishes the similar inequality:
\begin{equation}
	\sum_{j=1}^N ||[\bar{Z}^{(j)} - i\bar{X}^{(j)}]^2|\Psi \rangle||^2 \le 144 N \epsilon ~.
\end{equation}
Namely, we have proved Eq.~\eqref{eq_ineq_from_SOS_2} with the constant $\alpha=144$. Consquently, we have $\alpha_1 = (1 + \sqrt{\alpha}/2)^2 = 49$. Finally, we have established the following inequalities:
\begin{eqnarray}
	\langle [S_z^{(\Phi)} - \bar{S}_z]^2\rangle \le 49 N^2 \epsilon \\
	\langle [S_x^{(\Phi)} - \bar{S}_x]^2\rangle \le \frac{N^2\epsilon}{2}[49 + (19 + \frac{56}{\pi})^2] ~.
\end{eqnarray}
This implies, via Eq.~\eqref{eq_quantity_Q}, the bound on total spin fluctuations after the SWAP gate:
\begin{equation}
	\langle [S_x^{(\Phi)}]^2 + [S_z^{(\Phi)}]^2 \rangle \le N^2 \epsilon \left[\sqrt{2/N} + \sqrt{r}\right]^2
\end{equation}
with $r=49 + [49 + (19 + \frac{56}{\pi})^2]/2 \approx 752$

\section{Generalization with arbitrary local phases}
\label{app_phases}
 In this section, we show how the derivation presented in the main text can be extended to the following  quantum Bell's inequality: 
\begin{equation}
	{\cal B} =  \frac{2}{k}\sum_{a,b=0}^{k-1} \sum_{i \neq j} \langle \sigma_a^{(i)} \sigma_b^{(j)} \rangle \cos[\pi(a-b)/k + \phi_i - \phi_j] \ge -Nk  ~,
	\label{eq_BI_phases_app}
\end{equation}
where $\phi_i$ are local phases. We show that reaching the quantum bound $-Nk$ self-tests the property: $\sum_{i,j}\cos(\phi_i - \phi_j)\langle \hat X^{(i)}\hat X^{(j)} + \hat Z^{(i)} \hat Z^{(j)}\rangle=0$, and local measurements which are the same as for $\phi_i=0$ (equally-spaced qubit measurements in the $xz$ plane). 

In order to do so, similarly to what was done in the main text for $\phi_i=0$, we introduce the collective variables $\bar{S_a} = \sum_{i=j}^N \hat \sigma_a^{(j)} e^{i\phi_j}$. We have again: $S_{ab} = \langle \bar{S_a} \bar{S_b} \rangle - \sum_i \langle \hat \sigma_a^{(i)} \hat \sigma_b^{(i)}\rangle$. We introduce the vector notations $[{\bm \sigma}^{(i)}]^T=(\hat\sigma_0, \cdots \hat\sigma_{k-1})^{(i)}$ and $\bar{\bf S}^\dagger = (\bar{S}_0^\dagger, \ldots \bar{S}_{k-1}^\dagger)$, and the matrix $M_{ab} = (2/k)\cos[\pi(a-b)/k]$. Eq.~\eqref{eq_BI_phases_app} then takes the same expression as in the main text for $\phi_i=0$:
\begin{equation}
	{\cal B}= \langle \bar{\bf S}^\dagger M \bar{\bf S} \rangle - \sum_{i=1}^N \langle {\bm \sigma}^T M {\bm \sigma}\rangle^{(i)} ~.
\end{equation}
The matrix $M$ is diagonalized as in the main text; and again, we have $\langle {\cal B}\rangle \ge (-2N)/[k\sin^2(\frac{\pi}{2k})]$ for all classical local-variable models.

The classical bound can be violated as follows. Measuring a $N$-qubit quantum state along directions $\hat\sigma_a^{(i)} = \hat Z^{(i)} \cos \theta_a + \hat X^{(i)} \sin \theta_a$ yields $ {\cal B} = 4 \sum_{ab} M_{ab} \langle [\hat J'_a]^\dagger \hat J'_b \rangle - N \sum_{ab} M_{ab} \cos(\theta_a - \theta_b)$, where $\hat J_a' = (1/2)\sum_j \hat \sigma_a^{(j)}e^{i\phi_j}$. Choosing $\theta_a = a\pi /k$, we obtain ${\cal B}  = 2k{\cal S} - (Nk/2) \sum_{ab}M_{ab}^2$, where ${\cal S} = \sum_{i,j}\cos(\phi_i - \phi_j)\langle \hat X^{(i)} \hat X^{(j)} + \hat Z^{(i)} \hat Z^{(j)}\rangle$. Whenever ${\cal S}=0$, we obtain ${\cal B} = -(Nk/2){\rm Tr}(M^2) = -Nk$. We will show that $B_{\rm q} = -Nk$ is the maximal violation allowed by quantum physics, and that reaching this value self-tests ${\cal S}=0$.

We parallel the derivation presented in the main text. We have the sum-of-squares (SOS) decomposition (recall that $\mathbb{1}-M = (\mathbb{1}-M)^2$):
\begin{equation}
	\hat{\cal B} + Nk\mathbb{1} = ({\bf c}^T \bar{\bf S})^\dagger ({\bf c}^T \bar{\bf S}) + ({\bf s}^T \bar{\bf S})^\dagger ({\bf s}^T \bar{\bf S})
	+ \sum_{i=1}^N [{\bm \sigma}^T (\mathbb{1} - M) {\bm \sigma}]^{(i)}
\end{equation}
We can introduce the operators on which the self-testing procedure rely. The $\bar{Z}^{(i)}$ and $\bar{X}^{(i)}$ have the same expression as in the main text:
\begin{subequations}
\label{eq_def_barZX_app_phases}
\begin{align}
&	\bar{Z}^{(i)} := \sqrt{2/k} ~ {\bf c}^T {\bm \sigma}^{(i)} = \frac{2}{k}\sum_{a=0}^{k-1} \hat  \sigma_a^{(i)} \cos\left(\frac{a\pi}{k}\right) \label{eq_def_barZ_app_phases}\\
&	\bar{X}^{(i)} := \sqrt{2/k} ~ {\bf s}^T {\bm \sigma}^{(i)} = \frac{2}{k}\sum_{a=0}^{k-1}\hat  \sigma_a^{(i)}  \sin\left(\frac{a\pi}{k}\right)\label{eq_def_barX_app_phases} ~.
\end{align}
\end{subequations}
Introducing then the (hermitian) operators:
\begin{subequations}
\label{eq_def_Sxz_app_phases}
\begin{align}
&\bar S_z = \sum_{i=1}^N [ \cos \phi_i\bar{Z}^{(i)} - \sin \phi_i \bar{X}^{(i)}]\\
&\bar S_x = \sum_{i=1}^N [ \cos \phi_i\bar{X}^{(i)} + \sin \phi_i \bar{Z}^{(i)}] \\
&\hat A_a^{(i)} := \hat \sigma_a^{(i)} - \frac{2}{k}\sum_{b=0}^{k-1} \hat \sigma_b^{(i)} \cos\left(\pi\frac{a-b}{k}\right) ~, \label{eq_def_Abar_app_phases}
\end{align}
\end{subequations}
the SOS decomposition reads explicitly:
\begin{equation}
	\hat{\cal B} + Nk\mathbb{1} = \frac{k}{2}(\bar{S}_z^2 + \bar{S}_x^2) 
	+ \sum_{i=1}^N \sum_{a=0}^{k-1} [\hat{A}_a{(i)}]^2  ~.
	\label{eq_SOS_app_phases}
\end{equation}
Notice that $\bar{S}_z^2 + \bar{S}_x^2 = \sum_{i,j} \cos(\phi_i - \phi_j)(\bar{Z}^{(i)} \bar{Z}^{(j)} + \bar{X}^{(i)} \bar{X}^{(j)})$.  This shows that measuring at equal angles in a given plane a $N$-qubit state s.t. $\sum_{i,j}\cos(\phi_i - \phi_j)\langle \hat X^{(i)} \hat X^{(j)} + \hat Z^{(i)} \hat Z^{(j)}\rangle=0$ reaches the quantum bound $-Nk$. To prove the converse statement, we may follow exactly the same proof as for $\phi_i=0$. To establish the fact that $\bar{Z}^{(i)}$ and $\bar{X}^{(i)}$ act as the Pauli $\hat Z$ and $\hat X$ operator, we follow the derivation of the main text. The only difference is between Eqs.~(9) and (10) of the main text, where $\Sigma_Z^{(j)}$ and $\Sigma_X^{(j)}$ [defined before Eq.~(10)] are now: $\Sigma_Z^{(j)} = \sum_{j'\neq j} [ \cos \phi_{j'}\bar{Z}^{(j')} - \sin \phi_{j'} \bar{X}^{(j')}]$ and $\Sigma_X^{(j)} = \sum_{j'\neq j} [ \cos \phi_{j'}\bar{X}^{(j')} + \sin \phi_{j'} \bar{Z}^{(j')}]$. From the SOS being zero, we have that $(\Sigma_Z -i\Sigma_X)^{(j)}|\psi\rangle = -e^{-i\phi_j}(\bar{Z} - i\bar{X})^{(j)}|\psi \rangle$. As the operator $\Sigma_Z^{(j)} - i\Sigma_X^{(j)}$ acts on subsystems $j'\neq j$, it commutes with any operator acting on $j$. The rest of the proof goes as in the main text. We finally construct the SWAP gate on the $\bar{Z}^{(i)}$ and $\bar{X}^{(i)}$ operators. To conclude, we notice that maximal violation of the BI implies that $\bar{S}_z \bar{\rho} = 0 = \bar{S}_x \bar{\rho}$. Hence the same property after the partial-SWAP gate: $\hat J'_x \hat \rho  = \hat J'_z \hat \rho = 0$, where $\hat J'_x = (1/2)\sum_{i=1}^N [\cos\phi_i\hat X^{(i)} + \sin\phi_i \hat Z^{(i)}]$ and $\hat J_z' = (1/2)\sum_{i=1}^N [\cos\phi_i\hat Z^{(i)} - \sin\phi_i \hat X^{(i)}]$. Hence, maximal quantum violation of the BI self-tests $4\langle (J_x')^2 + (J_z')^2 \rangle = 0 = \sum_{i,j}\cos(\phi_i - \phi_j)\langle \hat X^{(i)}\hat X^{(j)} + \hat Z^{(i)} \hat Z^{(j)}\rangle$.

\end{document}